\newtheorem{lemma}{Lemma}
\newtheorem{theorem}{Theorem}
\newtheorem{proposition}{Proposition}
\newcommand\LL[1]{\multicolumn{1}{|c}{#1}}
\newcommand\RR[1]{\multicolumn{1}{c|}{#1}}
\newcommand\LR[1]{\multicolumn{1}{|c|}{#1}}
\newcolumntype{x}[1]{>{\centering\arraybackslash\hspace{0pt}}p{#1}}
\pgfmathsetmacro{\r}{3} %
\pgfmathsetmacro{\ra}{2}
\pgfmathsetmacro{\h}{21/8} %
\NewDocumentCommand{\twohalfcircle}{ O{#2} m }{%
    \begin{tikzpicture}
    \fill[#2] (0,0) --- (90:0.5ex) arc (90:270:0.5ex) --- cycle;; 
    \fill[#1] (0,0) --- (90:0.5ex) arc (90:-90:0.5ex) --- cycle; 
    \draw (0,0) circle (0.5ex);
    \end{tikzpicture}
}
\title{On the Uniqueness of Nash Equilibria in Multiagent Matrix Games} 
\author{
 James P. Bailey \\
  Industrial and Systems Engineering\\
  Rensselaer Polytechnic Institute\\
  \texttt{bailej6@rpi.edu} 
  }
\date{}
\begin{document}
\maketitle

\begin{abstract}
We provide a complete characterization for uniqueness of equilibria in unconstrained polymatrix games. 
We show that while uniqueness is natural for coordination and general polymatrix games, zero-sum games require that the dimension of the combined strategy space is even. 
Therefore, non-uniqueness is common in zero-sum polymatrix games. 
In addition, we study the impact of non-uniqueness on classical learning dynamics for multiagent systems and show that the classical methods still yield unique estimates even when there is not a unique equilibrium.
\end{abstract}


\section{Introduction}

Aumann describes  two-agent zero-sum games  as ``one of the few areas in game theory, and indeed in the social sciences, where a fairly sharp, unique prediction is made'' \cite{Aumann87}. 
However, despite the general importance of polymatrix (multiagent) games, no uniqueness result is known for the setting and non-degenerate instances of polymatrix zero-sum games have been observed \cite{cai2016zero}. 
In this paper, we study polymatrix games and make three contributions related to the uniqueness of Nash equilibria.
First, we develop a generic method for determining uniqueness of equilibria in classes of polymatrix games.
Second, we provide a complete characterization for uniqueness of equilibria in zero-sum, coordination, and general polymatrix games --- most notably, there are dense, meaningful subclasses of polymatrix zero-sum games that \textit{never have unique equilibria}. 
Finally, we analyze the impact of multiple equilibria in polymatrix zero-sum games on classical learning dynamics and provide insights on recovering unique predictions in these settings --- most notably, we show that the classical gradient descent dynamics still yield a unique prediction in a polymatrix zero-sum game even when there are multiple Nash equilibria thereby recovering the unique prediction heralded by Aumann.

Polymatrix zero-sum games are often used to capture multiagent interactions between decentralized decision makers. 
The outcomes of these systems are characterized by Nash equilibria.
Most theoretical and applied studies of game theory focus around the existence and properties of these equilibria, e.g., \cite{barelli2013note,dasgupta1986existence, carbonell2018existence, fu2022equilibrium}. 
Fortunately for polymatrix games, when strategies are contained in a compact, convex set there always exists at least one equilibrium \cite{howson1972equilibria} and several efficient solution methods are well known, e.g., \cite{deligkas2017computing}. 

However, in many applications of game theory, existence is insufficient and it is important to have a unique equilibrium. 
For instance, the uniqueness of equilibria is considered essential in decentralized security decisions \cite{miura2008security} and the provision of public goods \cite{naghizadeh2017provision}.
Uniqueness of equilibria in polymatrix games also plays an important role in recent applications of online learning. 
The study of online learning to find Nash equilibria in polymatrix zero-sum (and more generally, convex-concave) games has seen a recent surge in attention (e.g., \cite{abernethy2021last,daskalakis2021nearoptimal, Gidel2019Momentum,Mertikopoulos2018CyclesAdverserial}), partially due to its application in Generative Adversarial Neural Networks (GANs) \cite{goodfellow2014generative}. 
Notably, several recent proof approaches for fast convergence of online optimization algorithms in zero-sum games rely on the existence of a unique Nash equilibrium \cite{Bailey2019Regret,bailey2021stochastic,Bailey18Divergence,daskalakis2019last,toonsi2024higher}.

\subsection{Our Contributions}

We develop a roadmap for determining the uniqueness of equilibria for various classes of games; 
specifically, we show that for any class of polymatrix games defined by an analytic matrix function over an open, connected domain that either (a) no game in the class has a unique equilibrium, or (b) almost every game in the class has a unique equilibrium (Theorem \ref{thm:UniqueOrNone}).
This drastically simplifies the process of proving uniqueness of equilibria when studying future classes of polymatrix games; to prove (b), it suffices to eliminate (a) by finding a single game with a unique Nash equilibrium.

We then use Theorem \ref{thm:UniqueOrNone} to characterize uniqueness in zero-sum, coordination, and general polymatrix games with unbounded strategy spaces. 
For coordination and general games, we show that there almost always is a unique equilibrium if and only if the dimension of each single agent's strategy space is at most half of the dimension of the combined strategy space (Theorems \ref{thm:CPerturb} and \ref{thm:GPerturb}), i.e., uniqueness is a fairly natural condition in unconstrained coordination and polymatrix games. 
However, this condition is not sufficient for zero-sum games;
we show that zero-sum polymatrix games have the additional necessary/sufficient condition that the dimension of the combined strategy space is even (Theorem \ref{thm:ZSPerturb}). 
As a result, many instances of polymatrix zero-sum games have multiple equilibria. 

This non-uniqueness is potentially concerning for applications of polymatrix zero-sum games as the sharp, unique prediction that Aumann heralded \cite{Aumann87} is lost in multiagent settings.
As a result, we also study the impact of multiple equilibria on classical learning dynamics for multiagent systems. 
We introduce the \ref{eqn:ContGD} learning dynamics --- a powerful tool commonly used in multiagent systems to update agents strategies and to discover Nash equilibria. 
Perhaps most surprisingly, we show that the existence of multiple Nash equilibria actually \textit{simplifies} the classical learning dynamics; we show that agent strategies, when updated with \ref{eqn:ContGD}, will be contained in an affine subspace of the strategy space that contains a single Nash equilibrium --- the one closest to the initial set of strategies --- as depicted in Figure \ref{fig:MultiEquil} causing the time-average of the dynamics to converge to the unique closest equilibrium
(Theorem \ref{thm:converge2}).
This allows us to recover the primary property of uniqueness that Aumann applauded --- we can make a fairly sharp, unique prediction of the equilibrium obtained by classical learning dynamics in polymatrix zero-sum games. 
Perhaps more importantly, this suggests that recent online learning convergence results \cite{bailey2021stochastic,Bailey18Divergence,Bailey19GDRegret,daskalakis2019last} with proofs that rely on uniqueness of equilibria likely extend even when there is not a unique equilibrium in the system.

\begin{figure}[ht]
\centering
    \begin{tikzpicture}[
    tdplot_main_coords,
    Helpcircle/.style={gray!70!black,
    },scale=0.5
    ]

    \draw[dashed] (0,-4,0) -- (0,8,0) node[below,black] {Set of Nash Equilibria};

    \coordinate (M1) at (0,0,0); 
    \coordinate (M2) at (0,4,0); 
    \coordinate (A) at (0,\h,0);

    \tdplotsetrotatedcoords{90}{90}{0}%
    \draw[thick, Helpcircle, red, tdplot_rotated_coords,decoration={markings, mark=at position 1 with {\pgftransformscale{3}\arrow{>}}},
            postaction={decorate}] (A) circle[radius=sqrt(\r*\r-\h*\h)];

    \begin{scope}[tdplot_screen_coords, on background layer]
    \fill[ball color= gray!20, opacity = 0.25] (M1) circle (\r); 
    \end{scope}

    \begin{scope}[tdplot_screen_coords, on background layer]
    \fill[ball color= gray!20, opacity = 0.25] (M2) circle (\ra); 
    \end{scope}

    \fill[black] ({sqrt(\r*\r-\h*\h)},\h,0) circle[radius=3pt];
    \node[right,black] at  ({sqrt(\r*\r-\h*\h)},\h,0) {$x(0)$};

    \fill[black] (M1) circle[radius=3pt];
    \node[right,black] at  (M1) {$x^*-\lambda \cdot d$};
    
    \fill[black] (M2) circle[radius=3pt];
    \node[right,black] at  (M2) {$x^*+d$};

    \draw[black] (M1)--({sqrt(\r*\r-\h*\h)},\h,0)--(M2);
    
    \fill[black] (A) circle[radius=3pt];
    \node[right,black] at  (A) {${x}^*$};
    \end{tikzpicture}
\caption{Despite non-uniqueness, \ref{eqn:ContGD} contains strategies to a subspace with a unique equilibrium. 
For instance, when using \ref{eqn:ContGD} in $\mathbb{R}^3$, agents' strategies remain equidistant from distinct Nash strategies $x^*+d$ and ${x}^*-\lambda \cdot d$ causing agent strategies to cycle around a lower-dimensional circle in $\{x\in \mathbb{R}^3: d^\intercal x = d^\intercal x^*$\} that uniquely contains the unique Nash equilibrium ${x}^*$  that is closest to the initial strategy $x(0)$.} \label{fig:MultiEquil}
\end{figure}
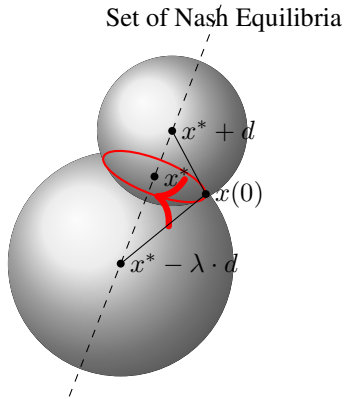



\section{Notation} \label{sec:Notation}

A \textit{polymatrix} game is a non-cooperative game with $n\geq 2$ agents where agent $i\in [n]:=\{1,...,n\}$  selects a $k_i$-dimensional strategy --- specifically, agent $i$ selects a strategy $x_i \in {\mathcal X}_i$ where ${\mathcal X}_i\subseteq \mathbb{R}^{k_i}$ is agent $i$'s strategy space.
For this paper, we focus on unbounded polymatrix games, i.e.,  ${\mathcal X}_i=\mathbb{R}^{k_i}$, a common area of focus in the develop of online optimization algorithms, e.g.,  \cite{Bailey2019Regret,Mertikopoulos2018CyclesAdverserial,pmlr-v202-vyas23a,NEURIPS2023_bf331c87}.

If agent $i$ selects strategy $x_i$ and agent $j$ selects $x_j$, then agent $i$ receives a utility of $x_i^\intercal A^{(ij)}x_j$ where $A^{(ij)}\in \mathbb{R}^{k_i\times k_j}$ is agent $i$'s payoff matrix for interactions with agent $j$. 
Once all agents select their strategies, agent $i$ receives a total utility of $x_i^\intercal (-b_i + \sum_{j\neq i}A^{(ij)}x_j)$ where $b_i$ denotes the (linear) cost for playing strategy $x_i$.

Frequently throughout this paper, we will consolidate agents' payoff matrices into a single matrix in the form
\begin{align}
    A:=\left[
    \begin{array}{c c c c}
    A^{(11)}=\{0\}^{k_1\times k_1}	&	A^{(12)}				&  \dots		&	A^{(1n)}				\\
    A^{(21)}		&	A^{(22)}	&  \dots		&	A^{(2n)}				\\
    \vdots					&	\vdots					&  \ddots		&	\vdots					\\
    A^{(n1)}					&	A^{(n2)}					&  \dots		&	A^{(nn)}	
    \end{array}\right]\tag{Consolidated Payoff Matrix}
\end{align}
and agents' cost vectors into the single vector
\begin{align}
    b:= \left[\begin{array}{c}b_1 \\ \vdots \\ b_n \end{array} \right]. \tag{Consolidated Cost Vector}
\end{align}

    A polymatrix game $G=(n,k,A,b)$ is in the form:
    \begin{align}\tag{Polymatrix Game}\label{eqn:BilinearGame}
	\max_{x_i\in \mathbb{R}^{k_i}}x_i ^\intercal &\left(-b_i + \sum_{j\neq i}A^{(ij)}x_j \right) \ \forall i\in[n].
    \end{align}

A solution to this game, if one exists, is called a Nash equilibrium.
The strategy profile $x^*$ is a Nash equilibrium if, for agent $i$, the strategy $x_i^*$ is a best response to $x_{-i}^*$ where $x_{-i}^*$ denotes the strategies of all agents except agent $i$.  Formally, $x^*=(x_1^*,...,x_n^*)$ is a Nash equilibrium if and only if the following holds for each agent:
\begin{align*}
\ x_i^\intercal \left(-b_i + \sum_{j\neq i}A^{(ij)}x_j^* \right) 
\leq &\  x_i^{*\intercal} \left(-b_i + \sum_{j\neq i}A^{(ij)}x_j^* \right) \ \forall x_i \in \mathbb{R}^{k_i}.\tag{Nash Conditions}\label{eqn:NashConditions}
\end{align*}
Equivalently, if agent $i$ switches from strategy $x_i^*$ to $x_i$, then the quality of agent $i$'s outcome should not improve; the strategy $x_i^*$ is a best response to the combined opponent strategy $x_{-i}^*$. 

The three primary types of polymatrix games examined in the literature are zero-sum games, coordination games, and general games which are defined as follows:
\begin{itemize}
	\item[] \textbf{Zero-Sum:} $A^{(ij)}=-\left(A^{(ji)}\right)^\intercal$ for all $i$ and $j$ --- i.e., the consolidated payoff matrix is $A=-A^\intercal$.
	\item[] \textbf{Coordination:} $A^{(ij)}=\left(A^{(ji)}\right)^\intercal$ for all $i$ and $j$ --- equivalently, $A=A^\intercal$.
	\item[] \textbf{General:} No restrictions on $A^{(ij)}$. 
\end{itemize}

These definitions do not make use of the linear cost $x_i^\intercal  b_i$.  
Games are intended to capture interaction between individuals. 
While $b_i$ does impact i's payoff, it does not capture interaction and is not referenced in the definitions of zero-sum, coordination, and general polymatrix games. 
The role of $b_i$ is discussed further in Section \ref{sec:Constraints}. 

\section{A Roadmap for Determining Unique Equilibria.}\label{sec:roadmap}

In this section, we establish a simple method for determining whether a class of games yields unique equilibria.  
We show that for a class of games defined through an analytic matrix function that either (1) no game in the class yields a unique equilibrium, or (2) almost every game in the class yields a unique equilibrium (Theorem \ref{thm:UniqueOrNone}).  
This provides a very simple method for proving uniqueness in games; to eliminate (1) it suffices to show there exists a single instance in the class with a unique equilibrium. 

We begin with two simple results characterizing when a unique equilibrium exists.

\begin{lemma}\label{lem:Nash} $x^*$ is a Nash equilibrium for the \ref{eqn:BilinearGame} $G=(n,k,A,b)$ if and only if it satisfies
\begin{align}\tag{Simplified Nash Conditions}\label{eqn:NashConditions2}
	\left(-b_i + \sum_{j\neq i}A^{(ij)}x_j^* \right)=\vec{0}^{k_i} \ \forall i\in [n].
\end{align}
where $\vec{0}^{k_i}$ is a vector of $k_i$ 0's.  Equivalently, $x^*$ is a Nash equilibrium if and only if $Ax^*=b$. 
\end{lemma}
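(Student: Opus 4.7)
The plan is to observe that in the \ref{eqn:BilinearGame}, agent $i$'s objective $x_i^\intercal(-b_i + \sum_{j\neq i} A^{(ij)} x_j^*)$ is, with $x_{-i}^*$ held fixed, a \emph{linear} function of $x_i$ over the unconstrained domain $\mathbb{R}^{k_i}$. A linear function $c^\intercal x_i$ on $\mathbb{R}^{k_i}$ attains a maximum if and only if $c = \vec{0}^{k_i}$; otherwise the objective is unbounded above (take $x_i = t \cdot c$ and let $t\to\infty$). Thus the existence of any best response for agent $i$ forces the coefficient vector $-b_i + \sum_{j\neq i} A^{(ij)} x_j^*$ to vanish, and conversely when it vanishes every $x_i^*\in\mathbb{R}^{k_i}$ trivially satisfies the \ref{eqn:NashConditions}.

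First I would state this equivalence formally for a single agent $i$, deriving the forward direction (Nash $\Rightarrow$ coefficient is zero) by the unboundedness argument above and the reverse direction (coefficient zero $\Rightarrow$ Nash) by noting both sides of the Nash inequality equal zero. Applying this simultaneously to every $i\in[n]$ yields the \ref{eqn:NashConditions2}.

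Finally I would translate the componentwise system into the matrix equation $Ax^* = b$ by reading off the block structure of the \textit{Consolidated Payoff Matrix}: since $A^{(ii)} = \{0\}^{k_i\times k_i}$, the $i$-th block of $Ax^*$ is exactly $\sum_{j\neq i} A^{(ij)} x_j^*$, and the $i$-th block of $b$ is $b_i$, so \ref{eqn:NashConditions2} is literally the block form of $Ax^* = b$.

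There is no real obstacle here; the only subtlety worth flagging is that unboundedness of the strategy space $\mathcal{X}_i = \mathbb{R}^{k_i}$ is essential — the equivalence would fail on a bounded domain where linear objectives always admit maximizers on the boundary even when the coefficient is nonzero.
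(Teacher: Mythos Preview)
Your proposal is correct and follows essentially the same approach as the paper: both exploit that agent $i$'s objective is linear in $x_i$ over $\mathbb{R}^{k_i}$, so a best response exists only when the coefficient vector vanishes (the paper exhibits the specific improvement $x_i^*+v_i$ with gain $\|v_i\|^2$, while you let $t\cdot c$ run to infinity), and the converse is the same triviality in both. Your explicit unpacking of the block structure to get $Ax^*=b$ is a nice addition the paper leaves implicit.
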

\begin{proof}
	$(\Rightarrow)$: We proceed by using the contrapositive and show that $x^*$ is not a Nash equilibrium if $-b_i + \sum_{j\neq i}A^{(ij)}x_j^*=v_i\neq \vec{0}^{k_i}$ for some agent $i$.  Suppose agent $i$ updates their strategy to $x_i^*+v_i$ resulting in the utility,
	\begin{align*}
		&\ (x_i^{*}+v_i)^\intercal  \left(-b_i + \sum_{j\neq i}A^{(ij)}x_j^* \right)\\
  =
  &\ x_i^{*\intercal} \left(-b_i + \sum_{j\neq i}A^{(ij)}x_j^* \right)+||v_i||^2\\
  >&\ x_i^{*\intercal} \left(-b_i + \sum_{j\neq i}A^{(ij)}x_j^* \right)
	\end{align*} 
	implying $x_i^*+v_i$ is a better response to $x_{-i}^*$ than $x_i^*$ and therefore $x^*$ is not a Nash equilibrium. 
	
	$(\Leftarrow):$ If $\left(-b_i + \sum_{j\neq i}A^{(ij)}x_j^* \right)=\vec{0}^{k_i}$, then $x_i^\intercal \left(-b_i + \sum_{j\neq i}A^{(ij)}x_j^* \right)= x_i^\intercal \vec{0}^{k_i}=0$ for all $x_i$ trivially satisfying the \ref{eqn:NashConditions}.
\end{proof}

\begin{lemma}\label{cor:invert}
	A \ref{eqn:BilinearGame} $G(n,k,A,b)$ has a unique Nash equilibrium $x^*$ if and only if $A$ is invertible. Specifically, the unique Nash equilibrium, if one exists, is given by $x^*=A^{-1}b$.
\end{lemma}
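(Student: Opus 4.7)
The plan is to reduce the whole statement to a standard fact about linear systems by invoking Lemma \ref{lem:Nash}. Lemma \ref{lem:Nash} tells us that the Nash equilibria of $G(n,k,A,b)$ are exactly the solutions of the linear system $Ax=b$, so the claim collapses to: \emph{the system $Ax=b$ has a unique solution if and only if $A$ is invertible}, together with the identification of that solution as $A^{-1}b$.

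For the $(\Leftarrow)$ direction, I would assume $A$ is invertible and set $x^*:=A^{-1}b$. Then $Ax^*=b$, and any other solution $\tilde x$ of $Ax=b$ satisfies $A(\tilde x-x^*)=0$, which forces $\tilde x=x^*$ since $A$ is nonsingular. By Lemma \ref{lem:Nash}, $x^*$ is the unique Nash equilibrium, and it is given explicitly by $A^{-1}b$.

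For the $(\Rightarrow)$ direction, I would use the contrapositive: assume $A$ is not invertible and show the game does not have a unique Nash equilibrium. Since $A$ is square (it is $(\sum_i k_i)\times(\sum_i k_i)$) and singular, its kernel contains some nonzero $v$. If $Ax=b$ has no solution, then by Lemma \ref{lem:Nash} the game has no Nash equilibrium at all, so in particular not a unique one. Otherwise pick any solution $x^*$; then $A(x^*+v)=Ax^*+Av=b$, so $x^*+v$ is another solution with $x^*+v\neq x^*$, and Lemma \ref{lem:Nash} turns this into two distinct Nash equilibria, ruling out uniqueness.

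There is no real obstacle here beyond being careful about the case analysis in the contrapositive (no solution vs.\ infinitely many), since the entire content of the lemma is carried by Lemma \ref{lem:Nash}; once the Nash conditions are replaced by $Ax=b$, the statement is an immediate consequence of the equivalence between invertibility of a square matrix and unique solvability of the associated linear system.
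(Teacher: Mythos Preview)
Your proposal is correct and matches the paper's treatment: the paper states the lemma immediately after Lemma~\ref{lem:Nash} without an explicit proof, implicitly relying on exactly the reduction you spell out (Nash equilibria are the solutions of $Ax=b$, so uniqueness is equivalent to invertibility of the square matrix $A$). Your careful handling of the contrapositive case split is more than the paper provides, but the underlying approach is identical.
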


In this paper, we aim to characterize when the Nash equilibrium is unique. 
By definition of linear independence, the \ref{eqn:NashConditions} have a unique solution if and only if the consolidated payoff matrix $A$ is invertible. 
Thus, we explore necessary and sufficient conditions for the existence of $A^{-1}$ in polymatrix games.

\subsection{Conditions for the Invertibility of $A$}

It is well-known that the set of non-invertible matrices have measure 0 and, by Lemma \ref{cor:invert}, we might expect that every polymatrix game has a unique Nash equilibrium. 
However, the set of matrices that actually correspond to a \ref{eqn:BilinearGame} also have measure 0 --- i.e., most matrices do not correspond to polymatrix games --- since the block matrices on the diagonal are uniformly zero (i.e., agent $j$ does not play against themselves and $A^{(jj)}=0$), and it is not clear whether every matrix in the class of polymatrix games is invertible. 
We provide a roadmap for identifying when a unique Nash equilibrium is a generic assumption and, in Section \ref{sec:polymatrix}, use that roadmap to characterize the uniqueness of Nash for zero-sum, coordination, and general games.

\begin{theorem}\label{thm:UniqueOrNone}
    Let $K=\sum_{i\in [n]} k_i$ be the total number of strategies and $A: {\mathcal U} \to \mathbb{R}^{K\times K}$  be an analytic matrix-function that generates polymatrix games where ${\mathcal U}$ is a connected open domain of $\mathbb{R}^d$ with probability measure $\mu$. 
	Either (1) no $G=(n,k,A(U),b)$ has a unique Nash equilibrium, or (2) the set of  $U\in {\mathcal U}$ where $G=(n,k,A(U),b)$ has a unique Nash equilibrium has probability measure one. 
\end{theorem}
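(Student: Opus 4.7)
The plan is to reduce everything to a statement about the zero set of a single real-analytic function and then invoke the identity theorem for real-analytic functions on a connected open domain. By Lemma \ref{cor:invert}, the game $G=(n,k,A(U),b)$ has a unique Nash equilibrium if and only if $A(U)$ is invertible, i.e., if and only if $\det(A(U))\neq 0$. So it suffices to analyze the zero set $Z=\{U\in\mathcal{U}:\det(A(U))=0\}$.

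First I would observe that $f(U):=\det(A(U))$ is real-analytic on $\mathcal{U}$. This is because $\det$ is a polynomial in the entries of its argument and each entry $A_{ij}(U)$ is by hypothesis analytic in $U$; sums and products of analytic functions are analytic, so $f$ is analytic on $\mathcal{U}$. Next I would apply the standard identity theorem for real-analytic functions on a connected open subset of $\mathbb{R}^d$: if $f$ is real-analytic on a connected open set and its zero set has positive Lebesgue measure (equivalently, contains an open subset, or even just a set with an accumulation point in a suitable sense), then $f\equiv 0$. A clean way to package this is to cite the well-known fact that the zero set of a nonzero real-analytic function on a connected open domain of $\mathbb{R}^d$ has Lebesgue measure zero (since on any ball the function has a convergent power series, and a nonzero power series vanishes on a measure-zero set; connectedness then propagates non-vanishing across $\mathcal{U}$).

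With that in hand, the dichotomy is immediate. Case (1): $f\equiv 0$ on $\mathcal{U}$, so $A(U)$ is singular for every $U$, and by Lemma \ref{cor:invert} no game in the class has a unique Nash equilibrium. Case (2): $f\not\equiv 0$, so $Z$ has Lebesgue measure zero; then $\mathcal{U}\setminus Z$ has full Lebesgue measure, and the set of $U$ for which $G=(n,k,A(U),b)$ has a unique Nash equilibrium has probability measure one under $\mu$ (interpreting $\mu$ as absolutely continuous with respect to Lebesgue measure on $\mathcal{U}$, which is the natural ``generic'' reading; otherwise one should state the conclusion as ``Lebesgue almost every $U$'').

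The main obstacle is not the algebra but the careful invocation of the right analytic-geometry lemma: I want the cleanest statement that for a real-analytic function on a connected open subset of $\mathbb{R}^d$, being nonzero at a single point forces the zero set to be measure-zero. Given that, the rest is just unwinding Lemma \ref{cor:invert}. A minor subtlety I would flag explicitly is the connectedness hypothesis on $\mathcal{U}$: without it, one could have $A(U)$ singular on one component and invertible on another, yielding neither (1) nor (2). Connectedness is exactly what prevents this, and it is precisely the hypothesis the theorem assumes.
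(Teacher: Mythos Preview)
Your proposal is correct and follows essentially the same route as the paper: reduce via Lemma~\ref{cor:invert} to the invertibility of $A(U)$, observe that $\det(A(U))$ is real-analytic on $\mathcal{U}$ as a polynomial composition of analytic entries, and then invoke the zero-set dichotomy for real-analytic functions on a connected open domain (the paper cites this as Lemma~\ref{lem:Zeros}). Your explicit remark about interpreting $\mu$ as absolutely continuous with respect to Lebesgue measure is a worthwhile clarification that the paper leaves implicit.
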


We remark that the Theorem \ref{thm:UniqueOrNone} also extends to a variety of quasi-analytic functions.
However, for this paper, non-analytic, and even most analytic functions, are not particularly interesting.  
Instead, to explore the full range of coordination, zero-sum, or general games, we only need the identity function $A(U)=U$. 

To prove Theorem \ref{thm:UniqueOrNone}, we make use of a similarly worded statement for the roots of an analytic function. 

\begin{lemma}[Roots of an Analytic Function \cite{mityagin2015zero}]\label{lem:Zeros}
    Let $g(x)$ be an analytic function on an open, connected domain ${\mathcal U}\subseteq \mathbb{R}^d$ with measure $\mu$. 
    Either (1) $g$ is identically zero, or (2) its zero set
        $R(g):= \{x\in {\mathcal U}: g(x)=0\}$
    has zero measure. 
\end{lemma}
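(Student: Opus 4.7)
The plan is to prove Lemma \ref{lem:Zeros} by induction on the dimension $d$, using the identity principle for real-analytic functions as the main structural input and Fubini's theorem to bridge one-dimensional slices to the whole domain.

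For the base case $d=1$, suppose $g:\mathcal{U}\to\mathbb{R}$ is analytic on a connected open $\mathcal{U}\subseteq\mathbb{R}$ with $g\not\equiv 0$. Any non-isolated zero $x_0$ would force all Taylor coefficients of $g$ at $x_0$ to vanish; the set of points where every derivative of $g$ vanishes is both open (via the local power series) and closed (by continuity of the derivatives), so by connectedness it equals $\mathcal{U}$, contradicting $g\not\equiv 0$. Hence $R(g)$ is discrete, at most countable, and of measure zero.

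For the inductive step, assume the result in dimension $d-1$ and let $g$ be analytic on connected open $\mathcal{U}\subseteq\mathbb{R}^d$ with $g\not\equiv 0$. Write $x=(t,y)$ with $t\in\mathbb{R}$ and $y\in\mathbb{R}^{d-1}$, set $g_y(t):=g(t,y)$ on $\mathcal{U}_y:=\{t:(t,y)\in\mathcal{U}\}$, and let $Y_0:=\{y:g_y\equiv 0\text{ on }\mathcal{U}_y\}$. The base case gives $|R(g_y)|_1=0$ whenever $g_y\not\equiv 0$, so Fubini's theorem reduces the problem to showing $|Y_0|=0$ in $\mathbb{R}^{d-1}$. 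I argue locally: cover $\mathcal{U}$ by countably many open product boxes $I\times V$ with $V\subseteq\mathbb{R}^{d-1}$ connected open, on which $g$ admits a convergent expansion $g(t,y)=\sum_{k\geq 0} a_k(y)(t-t_0)^k$ with each $a_k$ real-analytic on $V$. For $y\in V\cap Y_0$ the slice $g_y$ vanishes on the interval $I$, forcing $a_k(y)=0$ for every $k$. If $|V\cap Y_0|>0$, then each $a_k$ has a positive-measure zero set in $V$, so by the inductive hypothesis $a_k\equiv 0$ on $V$ for all $k$; hence $g\equiv 0$ on $I\times V$, and the identity principle on the connected set $\mathcal{U}$ then forces $g\equiv 0$ globally, a contradiction. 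Countable subadditivity then delivers $|Y_0|=0$.

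The main obstacle is the slicing step of the induction: I need product charts on which the Taylor-in-$t$ coefficients are simultaneously real-analytic in $y$ with a common radius of convergence, which follows from the existence of polydisc neighborhoods of convergence guaranteed by real-analyticity, and I need the identity principle to propagate vanishing from one small product box across the whole connected domain $\mathcal{U}$. Everything else — the base case, Fubini, and the countable union bound — is routine once these two ingredients are in place.
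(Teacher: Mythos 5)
The paper does not actually prove this lemma: it is imported verbatim from the cited note of Mityagin and used as a black box in the proof of Theorem \ref{thm:UniqueOrNone}. Your argument is therefore not comparable to anything in the paper itself, but it is essentially the standard proof (and essentially the one in the cited reference): induction on $d$, isolated zeros in the one-dimensional base case, and Fubini together with the real-analyticity of the Taylor-in-$t$ coefficients $a_k(y)$ to control the degenerate slices. The argument is sound, with one bookkeeping point you should tighten: for $d\geq 2$ the slice $\mathcal{U}_y=\{t:(t,y)\in\mathcal{U}\}$ need not be connected, so ``$g_y\not\equiv 0$ on $\mathcal{U}_y$'' does not by itself give $|R(g_y)|_1=0$ --- the function $g_y$ could vanish identically on one component of $\mathcal{U}_y$ and not on another, leaving a positive-measure slice for a $y$ that is outside your global set $Y_0$. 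The fix is already implicit in the local argument you run: work with the box-local sets $Z_{I,V}=\{y\in V: g(\cdot,y)\equiv 0 \text{ on } I\}$ rather than the global $Y_0$. Your coefficient argument shows each $Z_{I,V}$ is null (or else $g\equiv 0$ on all of $\mathcal{U}$ by the identity principle); every ``bad'' $y$ lies in some $Z_{I,V}$ of the countable cover, since any component of $\mathcal{U}_y$ on which $g_y$ vanishes contains a full interval $I$ from one of the boxes; and Fubini on each box plus countable subadditivity then finishes the proof. With that rewording the proof is complete and correct, and it supplies a self-contained justification for a step the paper only cites.
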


\begin{proof}[Proof of Theorem \ref{thm:UniqueOrNone}]
    By Lemma \ref{cor:invert}, $A(U)$ has a unique Nash equilibrium if and only if $A(U)$ is invertible; equivalently, if $det(A(U))\neq 0$. 
    Let $g(U)=det(A(U))$. 
    Since a matrix is invertible if and only if its determinant is non-zero, the statement of Theorem \ref{thm:UniqueOrNone} is equivalent to: Either (1) $g(U)=0$ for all $U\in {\mathcal U}$, or (2) the zero set of $g$ has measure zero.

    Using the Leibniz formula,  $g(U)= \sum_{\sigma \in S_K} sgn(\sigma)\prod_{i\in \sigma}A_{i,\sigma_i}(U)$ where $S_K$ denotes the set of permutations of $[K]$ and $sgn(\sigma)$ denotes the signature of $\sigma$, i.e., $sgn(\sigma)=1$ if $\sigma$ can be obtained from $[K]$ via an even number of pairwise exchanges and $sgn(\sigma)=-1$ otherwise. Therefore $g$ is an analytic function since $A$ is analytic and the determinant is an analytic (polynomial) function. 
    Theorem \ref{thm:UniqueOrNone} then follows by applying Lemma \ref{lem:Zeros} to $g(U)$.  
\end{proof}

	Theorem \ref{thm:UniqueOrNone} provides a simple outline to prove uniqueness. 
	So long as set of games analyzed is generated from an analytic function over a connected, open domain and there is at least one game with a unique Nash equilibrium, then almost every game generated in this way will have a unique Nash equilibrium.
	In this paper, we are solely interested in zero-sum, coordination, and general games. 
	However, if in the future we wish to show some other set of games has a unique Nash equilibrium, then Theorem \ref{thm:UniqueOrNone} provides a simple roadmap. 

\section{Polymatrix Games}\label{sec:polymatrix}

We begin by providing a necessary condition for a unique Nash equilibrium for all games. 
Specifically, we show if the dimension of an agent's strategy space is more than half of the dimension of the combined strategy space, then there is not a unique Nash equilibrium. 

\begin{proposition}\label{thm:Half}
    The polymatrix  game $G=(n,k,A,b)$ does not have a unique Nash equilibrium if there is an agent $i\in [n]$ such that $k_i> K/2=\sum_{j\in [n]}k_j/2$. 
\end{proposition}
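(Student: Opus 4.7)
The plan is to invoke Lemma \ref{cor:invert} and show that the consolidated payoff matrix $A$ fails to be invertible whenever some agent's strategy space has dimension greater than $K/2$. The essential structural fact to exploit is that the diagonal block $A^{(ii)}$ is the $k_i \times k_i$ zero block, so the $k_i$ rows of $A$ belonging to agent $i$ vanish on the $k_i$ columns belonging to agent $i$.

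Concretely, I would fix the agent $i$ with $k_i > K/2$ and look at the submatrix of $A$ consisting of the $k_i$ rows corresponding to agent $i$. Since $A^{(ii)} = 0$, each of these $k_i$ rows has its support contained in the $K - k_i$ columns associated with agents $j \neq i$. Consequently, when restricted to their nonzero coordinates, these rows live in a space of dimension at most $K - k_i$. The hypothesis $k_i > K/2$ gives $k_i > K - k_i$, so $k_i$ vectors lying in a space of dimension $K - k_i$ must be linearly dependent. Therefore the rows of $A$ are linearly dependent, $\mathrm{rank}(A) < K$, and $\det(A) = 0$. Applying Lemma \ref{cor:invert} in the contrapositive direction immediately yields that $G$ does not have a unique Nash equilibrium.

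There is no real obstacle here: the argument is a one-line rank count once the block-zero structure of the diagonal blocks is identified. The only subtlety worth flagging in the write-up is that non-uniqueness in the statement includes both the case of infinitely many equilibria and the case of no equilibrium at all (depending on whether $b$ lies in the column space of $A$); this is consistent with Lemma \ref{cor:invert}, which characterizes \emph{uniqueness} exclusively by invertibility of $A$ and not by existence.
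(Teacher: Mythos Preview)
Your argument is correct. Both you and the paper reduce to showing $\det(A)=0$ and then invoke Lemma~\ref{cor:invert}; the difference lies in how singularity is established. The paper expands $\det(A)$ via the Leibniz formula and uses the pigeonhole principle: since agent~$i$'s index set has size $k_i>K/2$, every permutation $\sigma$ of $[K]$ must send at least one index in that set back into the set, so every product $\prod_w A_{w,\sigma_w}$ picks up an entry from the zero block $A^{(ii)}$ and vanishes. You instead argue by rank: the $k_i$ rows of $A$ belonging to agent~$i$ are supported on only $K-k_i<k_i$ columns and are therefore linearly dependent. Your route is shorter and avoids any combinatorics of permutations; the paper's route, on the other hand, rehearses the Leibniz machinery that it reuses in the constructive proofs of Theorems~\ref{thm:CPerturb}--\ref{thm:ZSPerturb}, so there is some expository economy in keeping a uniform tool. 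Your closing remark about non-uniqueness encompassing both ``no equilibrium'' and ``infinitely many'' is accurate and worth retaining.
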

\begin{proof}
    By Lemma \ref{cor:invert}, it suffices to show that $A$ is not invertible.  
    Using the Leibniz formula for the determinant, $det(A)=\sum_{\sigma \in S_K} sgn(\sigma)\prod_{w=1}^K A_{w,\sigma_w}$.

 Consider any \ref{eqn:BilinearGame} where the dimension of agent i's strategy space is more than half of the dimension of the combined strategy space, i.e.,  $k_i> K/2$, and consider any permutation, $\sigma$, of $[K]$. 
 Without loss of generality, we assume that $i=1$ so that agent $i$ controls strategies $1,..,k_1$. 
 Through a standard application of the pigeonhole  principle, there is at least one $j \in [k_1]$ such that $\sigma_j\in [k_1]$. 
For that particular $j$, $A_{j,\sigma_j}=0$ by definition of $A$ --- equivalently since no agent is playing against themselves, i.e., since $A^{(ii)}$ is uniformly zero. 
Therefore $\prod_{w=1}^K A_{w,\sigma_w}=0$.  
This holds for all $\sigma\in S_K$ and therefore $det(A)=0$, $A$ is not invertible, and there is not a unique Nash equilibrium by Lemma \ref{cor:invert}. 
\end{proof}

We show later this condition is also sufficient for both coordination and general games, i.e., almost every coordination/general game the dimension of each agent's strategy space is at most half of the dimension of the combined strategy space has a unique Nash equilibrium. 
However, zero-sum games have an additional necessary condition, which we discuss in Section \ref{sec:ZS}.

\subsection{Coordination Games and General Games}\label{sec:CG}

In this section, we use Theorem \ref{thm:UniqueOrNone} to show that almost every coordination and general polymatrix game will have a unique Nash equilibrium so long as the dimension of each agent's strategy space is at most half of the dimension of the combined strategy space. 
To apply Theorem \ref{thm:UniqueOrNone}, we need to define an analytic matrix function over an open, connected domain and a respective probability measure such that every coordination/general polymatrix game can be obtained. 
Then, by Theorem \ref{thm:UniqueOrNone}, either no game has a unique Nash equilibrium or almost every game does.  
Thus, it will suffice to provide a single game obtainable by the matrix function and probability measure that has a unique equilibrium.

\begin{theorem}\label{thm:CPerturb}
    Almost every polymatrix coordination game has a unique Nash equilibrium if and only if $k_i\leq K/2=\sum_{j\in [n]} k_j/2$ for all $i\in [n]$.
    Formally, let ${\cal U}^{(ij)}=\mathbb{R}^{k_i\times k_j}$ for $i<j$, ${\cal U}=\times_{i<j}{\cal U}^{(ij)}$.  
    For $i<j$, let $A^{(ij)}(U)=U^{(ij)}$ and $A^{(ji)}(U)=[U^{(ij)}]^\intercal$. 
    Finally, let $\mu$ be an arbitrary full-dimensional Gaussian distribution over $\cal U$. 
    Then the set of $U\in {\cal U}$ that has a unique Nash equilibrium has probability measure one if and only if $k_i\leq K/2=\sum_{j\in [n]} k_j/2$ for all $i\in [n]$.
\end{theorem}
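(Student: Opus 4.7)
The plan is to apply Theorem~\ref{thm:UniqueOrNone} to the map $U \mapsto A(U)$, which is linear and hence analytic on the open connected domain $\mathcal{U}$. Since the Gaussian measure $\mu$ is absolutely continuous with respect to Lebesgue measure, ``probability one under $\mu$'' is equivalent to ``zero Lebesgue measure for the complement''. Theorem~\ref{thm:UniqueOrNone} therefore yields the dichotomy that either no $U$ gives a unique Nash equilibrium or almost every $U$ does. The necessity direction of the iff is then immediate from Proposition~\ref{thm:Half}: if some $k_i > K/2$, then $A(U)$ is singular for every $U$, so the unique-Nash set is empty and certainly not of full measure. What remains is the sufficiency direction, where under $k_i \leq K/2$ for all $i$ it suffices to exhibit a \emph{single} $U^* \in \mathcal{U}$ with $A(U^*)$ invertible.

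For the witness, I would view $A(U)$ as the symmetric weighted adjacency matrix of a graph on $[K]$ whose vertex set is partitioned into blocks of sizes $k_1, \dots, k_n$ and whose edges lie only between different blocks. It suffices to produce a spanning subgraph of the complete multipartite graph on these blocks whose $\{0,1\}$-adjacency matrix is non-singular. When $K$ is even, I would take a perfect matching of this multipartite graph: such a matching exists whenever $k_i \leq K/2$, by a short induction that at each step pairs one vertex from each of the two largest blocks and uses integrality of the $k_i$ to see that the max-block-size condition persists. The corresponding $A^*$ is then a symmetric fixed-point-free permutation matrix with determinant $\pm 1$, hence invertible.

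When $K$ is odd a perfect matching is impossible, but we must have $n \geq 3$ (the case $n=2$ with $k_1, k_2 \leq K/2$ and $k_1 + k_2 = K$ forces $k_1 = k_2 = K/2$ and hence $K$ even). I would pick one representative vertex from each of the three largest blocks to form a triangle, then match the remaining $K-3$ vertices. A short case check shows the residual block sizes all satisfy $k_i' \leq (K-3)/2$: for the three reduced blocks, $k_i \leq (K-1)/2$ (because $K$ is odd) gives $k_i - 1 \leq (K-3)/2$; for any untouched block, $k_i = (K-1)/2$ would force four blocks each of that size, summing to more than $K$, contradicting $\sum_i k_i = K$. The resulting $A^*$ is, after reordering rows and columns, block-diagonal with one $3 \times 3$ block having $0$ on the diagonal and $1$ off-diagonal (determinant $2$) together with $(K-3)/2$ copies of the $2 \times 2$ swap matrix (each with determinant $-1$), giving $\det A^* = \pm 2 \neq 0$.

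The main obstacle is the odd-$K$ case: the perfect-matching witness is no longer available and must be patched with a triangle, and verifying that this triangle-plus-matching construction remains a valid spanning subgraph of the multipartite graph requires the integrality/pigeonhole check above. The even-$K$ case, the necessity direction, and the reduction to finding a single invertible witness all fall out of the framework already assembled by Theorem~\ref{thm:UniqueOrNone} and Proposition~\ref{thm:Half}.
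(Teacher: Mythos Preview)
Your proposal is correct and follows the same high-level route as the paper: invoke Theorem~\ref{thm:UniqueOrNone} so that a single invertible witness suffices, get necessity from Proposition~\ref{thm:Half}, and then construct the witness separately for $K$ even and $K$ odd. For even $K$ your perfect-matching witness is exactly the paper's idea; the paper just writes down the specific matching $w\leftrightarrow w+K/2$, which avoids the greedy induction because any contiguous block of size at most $K/2$ cannot contain both endpoints.

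The genuine difference is in the odd-$K$ witness. The paper builds a single explicit matrix, essentially the adjacency matrix of a $K$-cycle (nonzeros on the $\lfloor K/2\rfloor$-th and $\lceil K/2\rceil$-th sub/superdiagonals), then shows via the Leibniz formula that exactly two permutations contribute, both with sign $+1$, giving $\det A=2$. A pleasant feature of this construction is that the same matrix is valid for \emph{every} admissible $(k_1,\dots,k_n)$, since any $\lfloor K/2\rfloor\times\lfloor K/2\rfloor$ principal submatrix on consecutive indices is zero. Your triangle-plus-matching construction instead tailors the witness to the given block sizes; it trades the single universal example for an argument that is more combinatorial but makes the determinant calculation trivial (block-diagonal with a $3\times 3$ block of determinant $2$ and $2\times 2$ swaps). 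Both are perfectly valid; the paper's version is slicker to state, yours makes the nonsingularity more transparent.
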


\begin{proof}
    In this proof, each $b$ and $k$ will correspond to a different class of games. 
    Thus, we show this result for an arbitrary $b\in \mathbb{R}^K$ and for an arbitrary $k\in \mathbb{Z}_{>0}^n$ where $\sum_{i\in [n]} k_i=K$ and $k_i \leq K/2$ for all $i\in [n]$. 

    Necessity is given by Proposition \ref{thm:Half}.
    To establish sufficiency, we make use of Theorem \ref{thm:UniqueOrNone}. 
    By definition of ${\cal U}$, a game with costs $b$ and dimensions $k$ is a coordination game if and only if it can be realized by $(A, {\cal U}, \mu)$. 
    Furthermore, $A(\cdot)$ is an analytic matrix function over the open, connected ${\cal U}$ and we can apply Theorem \ref{thm:UniqueOrNone}.
    By Theorem \ref{thm:UniqueOrNone}, either (1) no realized game has a unique equilibrium, or (2) almost every game does. 
    To eliminate (1) and to prove (2) it suffices to provide a single realizable game that has a unique equilibrium. 
    We now break the proof into cases based on the parity of $K$. 

    \textbf{For $K$ even:}
    \begin{align}\tag{Coordination Game with Unique Nash for Even $K$}
		A=
			\left[
				\begin{array}{x{0.6cm} x{0.6cm} x{0.6cm} x{0.6cm} x{0.6cm} x{0.6cm} }
					\cline{1-3}
					\LL{0} & 0 & \RR{0} & \cellcolor{gray!25}1& 0 & 0\\
					\cline{2-4}
					\LL{0} & \LL{0} & \RR{0} & \RR{0} & \cellcolor{gray!25}1 & 0\\
					\cline{3-5}
					\LL{0} & \LL{0} & \LR{0} & \RR{0} & \RR{0} & \cellcolor{gray!25}1\\
					\cline{1-3}\cline{4-6}
					\cellcolor{gray!25}1 & \LL{0} & \LR{0} & \RR{0} & \RR{0} & \RR{0}\\
					\cline{2-4}
					0 & \cellcolor{gray!25}1 & \LR{0} & 0 & \RR{0} & \RR{0}\\
					\cline{3-5}
					0 & 0 & \RR{\phantom{-}\cellcolor{gray!25}1\phantom{-}} & 0 & 0 & \RR{0}\\
					\cline{4-6}
				\end{array}
			\right]
\end{align}
    We construct a single consolidated payoff matrix that will be valid for any $k$ where $k_i< K/2$ for all $i\in [n]$.
    Let $A_{w,\frac{K}{2}+w}=A_{\frac{K}{2}+w,w}=1$ for all $w\in \left[\frac{K}{2}\right]$ and let $A_{w,v}=0$ for all other pairs $\{w,v\}$ implying $A=A^\intercal$. 
    The matrix $A$ is such that each $K/2$ square submatrix centered on the diagonal of $A$ is filled with zeroes and such that the first sub/superdiagonals below/above these matrices are filled with ones as shown above for $K=6$.

    Once again, the Leibniz formula for the determinant is $det(A)= \sum_{\sigma \in S_K} \prod_{w\in [K]} A_{w,\sigma_w}$. 
    Since every row/column of $A$ contains a single non-zero element, the only permutation of $[K]$ with a non-zero product is $\sigma=[K/2+1,K/2+2,...,K,1,2,K/2]$ as depicted by the gray cells in the above matrix. Therefore $|det(A)|=1$. This implies $A$ is invertible, there is a unique solution to $Ax=b$, and, so long as $A$ corresponds to a polymatrix coordination game, $A$ has a unique equilibrium. 
    
    It remains to verify that $A$ corresponds to a polymatrix coordination game, i.e., that $A^{(ij)}=[A^{(ji)}]^\intercal$ and that $A^{(jj)}=\{0\}^{k_j\times k_j}$.  
    Let $s_i\subset[K]$ and $s_j\subset[K]$ denote the components of $x=[x_1, x_2, ..., x_n]$ that correspond to agents $i$ and $j$ respectively so that  $A^{(ij)}=A_{s_i,s_j}$. 
    By construction, $A=A^\intercal$ and $A^{(ji)}= A_{s_j,s_i}= A_{s_i,s_j}^\intercal=[A^{(ij)}]^\intercal$.

    To show $A^{(jj)}=\{0\}^{k_j\times k_j}$, simply observe that $A^{(jj)}$ is necessarily a square matrix of size at most $K/2$ centered on the diagonal of $A$ and therefore is everywhere $0$ by construction.  
    Formally, suppose agent $j$ controls strategy $w$ in $A$.  
    If $w\leq K/2$, then the only non-zero entry in row $w$ of $A$ is in column $K/2+w$ which is not controlled by agent $j$ since $j$ controls at most $K/2$ strategies implying every entry corresponding to row $w$ in $A^{(jj)}$ is zero. 
    Similarly, if $w=w'+K/2> K/2$, then the only non-zero entry in row $w$ of $A$ is in column $w'=w-K/2$ which is not controlled by agent $j$ since $j$ controls at most $K/2$ strategies and every entry corresponding to row $w$ in $A^{(jj)}$ is zero. 
    Thus, $A^{(jj)}$ is uniformly zero. 

    Therefore $A$ corresponds to a polymatrix coordination game with a unique Nash equilibrium on $n$ agents where agent $i$ controls $k_i$ strategies. 
    Thus, by Theorem \ref{thm:UniqueOrNone}, almost every polymatrix coordination game in this form will have a unique Nash equilibrium.

    \textbf{For $K$ odd:},
we construct a similar matrix.
    As with the previous case, the matrix will contain a square $\lfloor \frac{K}{2} \rfloor$ submatrix of zeroes along the diagonal.
    Unlike the even case, the first two sub/super diagonals below/above the zero submatrices will contain 1s. 
    Formally,
    \begin{align*}
        A_{w,\lfloor\frac{K}{2}\rfloor+w}=A_{\lfloor\frac{K}{2}\rfloor+w,w}=1 \ for\ & w\leq \lceil K/2\rceil \tag{first sub/superdiagonal above zero submatrices}\\
        A_{w,\lceil\frac{K}{2}\rceil+w}= A_{\lceil\frac{K}{2}\rceil+w,w}=1 \ for\ & w\leq \lfloor K/2\rfloor \tag{second sub/superdiagonal above zero submatrices}
    \end{align*}
    and $A_{wv}=0$ for all other $\{w,v\}$ as shown below for $K=7$.
    Through the same reasoning as in the previous case, this matrix corresponds to a coordination game ($A^{(ij)}=[A^{(ji)}]^\intercal$ and $A^{(jj)}$ is uniformly zero).
    
    	\begin{align}\tag{Coordination Game with Unique Nash for Odd $K$}
A=
	\left[
		\begin{array}{x{0.6cm} x{0.6cm} x{0.6cm} x{0.6cm} x{0.6cm} x{0.6cm} x{0.6cm}}
			\cline{1-3}
			\LL{0} & 0 & \RR{0} & \cellcolor{gray!25}1 & \cellcolor{gray!50}1 & 0& 0\\
			\cline{2-4}
			\LL{0} & \LL{0} & \RR{0} & \RR{0} & \cellcolor{gray!25}1 & \cellcolor{gray!50}1& 0\\
			\cline{3-5}
			\LL{0} & \LL{0} & \LR{0} & \RR{0} & \RR{0} & \cellcolor{gray!25}1& \cellcolor{gray!50}1\\
			\cline{1-3}\cline{4-6}
			\cellcolor{gray!50}1 & \LL{0} & \LR{0} & \RR{0} & \RR{0} & \RR{0}& \cellcolor{gray!25}1\\
			\cline{2-4}\cline{5-7}
			\cellcolor{gray!25}1 & \cellcolor{gray!50}1 & \LR{0} & \RR{0} & \RR{0} & \RR{0}& \RR{0}\\
			\cline{3-5}
			0 & \cellcolor{gray!25}1 & \RR{\cellcolor{gray!50}1} & \RR{0} & 0 & \RR{0}& \RR{0}\\
			\cline{4-6}
			0 & 0 & \cellcolor{gray!25}1 & \RR{\cellcolor{gray!50}1} & 0 & 0& \RR{0}\\
			\cline{5-7}
		\end{array}
	\right]
\end{align}

It remains to show that this game has a unique Nash equilibrium. 
Once again, this is equivalent to showing that $A$ has a non-zero determinant. 
Using the Leibniz formula for the determinant, $det(A)= \sum_{\sigma \in S_K} \prod_{w\in [K]} A_{w,\sigma_w}$, there are only two permutations of $[K]$ corresponding to a non-zero product.
The two permutations are described below and depicted by the light and dark gray cells in the matrix for $K=7$ above.
\begin{align*}
	\sigma^{(1)}&= \{\lceil K/2 \rceil , ...., K, 1 ,2 , ..., \lfloor K/2 \rfloor \}	\tag{$\sigma$ with Non-Zero Products}\\
	\sigma^{(2)}&= \{\lceil K/2 \rceil +1, ...., K, 1 ,2 , ..., \lfloor K/2 \rfloor +1 \}
\end{align*}

It is well known that the signature of a permutation $\sigma$ can be computed with $sgn(\sigma)= (-1)^{N(\sigma)}$ where $N(\sigma)$ is the number of inversions in $\sigma$, i.e., the number of $\{w,v\}$ where $w<v$ but $\sigma_w>\sigma_v$. 
For $\sigma^{(1)}= \{\lceil K/2 \rceil , ...., K, 1 ,2 , ..., \lfloor K/2 \rfloor \}$, each element of $\{1,...,\lfloor K/2 \rfloor\}$ is inverted with exactly $\lceil K/2 \rceil$ elements implying $N(\sigma^{(1)}) = \lceil K/2 \rceil \cdot  \lfloor K/2 \rfloor=\lceil K/2 \rceil\cdot (\lceil K/2 \rceil-1)$ since $K$ is odd. 
Therefore, $N(\sigma^{(1)})$ is even and $sgn(\sigma^{(1)})=1$. 
Through an identical argument $sgn(\sigma^{(2)})=1$ implying $det(A)=\sum_{\sigma \in \{\sigma^{(1)},\sigma^{(2)}\}}sgn(\sigma)\cdot\prod_{w \in \sigma}A_{w,\sigma_w}=\sum_{\sigma \in \{\sigma^{(1)},\sigma^{(2)}\}}sgn(\sigma)=2$ and $A$ has a unique Nash equilibrium.
Therefore, by Theorem \ref{thm:UniqueOrNone}, almost every coordination game has a unique Nash equilibrium when $K$ is odd.
This completes both cases and therefore almost every polymatrix coordination game has a unique Nash equilibrium.
\end{proof}

Since every coordination game is also a general game, the proof for general games follows almost identically.  
We only need to redefine our matrix function and the open, connected set generating games.

\begin{theorem}\label{thm:GPerturb}
    Almost every polymatrix (general) game has a unique Nash equilibrium if and only if $k_i\leq K/2=\sum_{j\in [n]} k_j/2$ for all $i\in [n]$.
    Formally, let ${\cal U}^{(ij)}=\mathbb{R}^{k_i\times k_j}$, ${\cal U}=\times_{i\neq j}{\cal U}^{(ij)}$, $A^{(ij)}(U)=U^{(ij)}$, and let $\mu$ be an arbitrary full-dimensional Gaussian distribution over $\cal U$. 
    Then the set of $U\in {\cal U}$ that has a unique Nash equilibrium has probability measure one if and only if $k_i\leq K/2=\sum_{j\in [n]} k_j/2$ for all $i\in [n]$. 
\end{theorem}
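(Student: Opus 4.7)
The plan is to follow the template established in the proof of Theorem \ref{thm:CPerturb} and reduce the general case to the coordination case. Necessity is immediate from Proposition \ref{thm:Half}: if some $k_i > K/2$ then $A$ is singular for every $U\in {\cal U}$ by the pigeonhole argument on the diagonal-block zeros, so alternative (1) of Theorem \ref{thm:UniqueOrNone} applies and no realized game has a unique equilibrium.

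For sufficiency, I first verify the hypotheses of Theorem \ref{thm:UniqueOrNone}. The domain ${\cal U} = \times_{i\neq j} {\cal U}^{(ij)}$ is a product of Euclidean spaces, hence open and connected in $\mathbb{R}^d$ with $d = \sum_{i\neq j} k_i k_j$, and every general polymatrix game with cost vector $b$ and dimension profile $k$ is realized by a unique $U\in {\cal U}$. The matrix function defined by $A^{(ij)}(U) = U^{(ij)}$ for $i\neq j$ and $A^{(ii)}=\{0\}^{k_i\times k_i}$ is entrywise polynomial and therefore analytic. Any full-dimensional Gaussian $\mu$ is absolutely continuous with respect to Lebesgue measure on ${\cal U}$. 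By Theorem \ref{thm:UniqueOrNone}, either no realized game has a unique Nash equilibrium or the set of $U\in {\cal U}$ that does has $\mu$-measure one, so it suffices to exhibit a single $U^*\in {\cal U}$ for which $A(U^*)$ is invertible.

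The key observation is that every coordination game is also a general game: the only structural constraint imposed on general polymatrix games beyond what is asked of arbitrary matrices is that the diagonal blocks $A^{(ii)}$ vanish. Thus the explicit consolidated payoff matrices constructed in the proof of Theorem \ref{thm:CPerturb} --- the one-nonzero-per-row-and-column matrix for $K$ even and its two-diagonal analogue for $K$ odd --- satisfy $A^{(ii)}=\{0\}^{k_i\times k_i}$ and therefore correspond to valid general polymatrix games in ${\cal U}$. The Leibniz-formula computations already performed there show $|\det A| \geq 1 > 0$, so by Lemma \ref{cor:invert} these games have unique Nash equilibria. This eliminates alternative (1) of Theorem \ref{thm:UniqueOrNone} and yields the conclusion.

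Since the witness matrices from Theorem \ref{thm:CPerturb} transfer over directly, there is no genuinely new technical obstacle; the only point requiring care is checking that the analyticity and connectedness hypotheses of Theorem \ref{thm:UniqueOrNone} still hold on the larger domain ${\cal U}$ obtained by dropping the symmetry constraint $A^{(ji)} = [A^{(ij)}]^\intercal$. Both are immediate, so the proof reduces to a brief verification plus a reference to the coordination construction.
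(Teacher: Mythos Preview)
Your proposal is correct and follows essentially the same approach as the paper: invoke Proposition \ref{thm:Half} for necessity, verify the hypotheses of Theorem \ref{thm:UniqueOrNone} on the larger domain ${\cal U}$, and then observe that the coordination-game witnesses from Theorem \ref{thm:CPerturb} are also valid general polymatrix games, eliminating alternative (1). Your write-up is in fact more careful than the paper's own proof, which compresses the verification step into a single sentence.
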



\begin{proof}
    As in the proof of Theorem \ref{thm:CPerturb}, $(A,{\cal U}, \mu)$ captures all general polymatrix games. 
    Further, since $A$ is analytic, and ${\cal U}$ is connected and open, Theorem \ref{thm:UniqueOrNone} applies and it suffices to give a single instance where there is a unique equilibrium. 
    Since coordination games are a subset of zero-sum games, the instances in the proof of Theorem \ref{thm:CPerturb} suffice to complete the proof.
\end{proof}

\subsection{Zero-Sum Games}\label{sec:ZS}

While  zero-sum games are a subset of general games, they are a measure-zero subset and therefore the conditions for general games do not necessarily extend to zero-sum games. 
In contrast, many zero-sum games will fail to have unique equilibria;
we establish that the existence of a unique Nash equilibria is a general condition in polymatrix zero-sum games if and only if (1) the dimension of each agent's strategy space is at most half of the dimension of the combined strategy space and (2) the total number of strategies is even.

\begin{proposition}\label{thm:ZSEven}
    The polymatrix zero-sum game $G=(n,k,A,b)$ does not have a unique Nash equilibrium if $K=\sum_{i\in [n]} k_i$ is odd.
\end{proposition}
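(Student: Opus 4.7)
The plan is to leverage the established equivalence (Lemma \ref{cor:invert}) between uniqueness of a Nash equilibrium and invertibility of the consolidated payoff matrix $A$, and then exploit the skew-symmetric structure forced by the zero-sum condition together with the odd-dimension hypothesis.

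First, I would record the key structural observation: for a polymatrix zero-sum game, the condition $A^{(ij)} = -[A^{(ji)}]^\intercal$ for all $i, j$, combined with $A^{(ii)} = \{0\}^{k_i \times k_i}$, means that the full $K \times K$ consolidated payoff matrix $A$ satisfies $A = -A^\intercal$, i.e., $A$ is skew-symmetric as a $K \times K$ matrix (the diagonal-block zero condition is consistent with skew-symmetry since it forces the main diagonal entries to vanish).

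Next, I would invoke the standard determinant identity for skew-symmetric matrices of odd order. Specifically, using the fact that $\det(cM) = c^K \det(M)$ for a $K \times K$ matrix and that $\det(M^\intercal) = \det(M)$, we compute
\begin{align*}
\det(A) = \det(-A^\intercal) = (-1)^K \det(A^\intercal) = (-1)^K \det(A).
\end{align*}
When $K$ is odd, $(-1)^K = -1$, giving $\det(A) = -\det(A)$, hence $\det(A) = 0$.

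Finally, I would conclude by applying Lemma \ref{cor:invert}: since $\det(A) = 0$, $A$ is not invertible, so the game cannot have a unique Nash equilibrium. There is no real obstacle here — the proof is essentially a one-line computation once the skew-symmetric structure is identified, making this proposition the direct analogue (in the zero-sum setting) of the pigeonhole argument used for Proposition \ref{thm:Half}. The more substantive work — establishing that the even-$K$ and dimension-halving conditions are jointly \emph{sufficient} for a generic unique equilibrium in the zero-sum case — is presumably deferred to a subsequent theorem that would invoke Theorem \ref{thm:UniqueOrNone} with a carefully constructed instance.
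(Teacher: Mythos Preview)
Your proposal is correct and takes essentially the same approach as the paper: both observe that the zero-sum structure forces $A=-A^\intercal$, then use $\det(A)=\det(A^\intercal)=\det(-A)=(-1)^K\det(A)$ with $K$ odd to conclude $\det(A)=0$, and finally invoke Lemma~\ref{cor:invert}. The paper's proof is terser (a single line), but the argument is identical.
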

\begin{proof}
	$det(A)=det(A^\intercal)=det(-A)=(-1)^{K}det(A)=-det(A)$ since $K$ is odd, and $det(A)=0$. 
	Thus $A$ is not invertible implying no unique Nash equilibrium. 
\end{proof}

Despite the simplicity of the proof, the result is quite surprising --- a polymatrix zero-sum game cannot have a unique equilibrium if the dimension of the strategy space is odd.  
This means that symmetric games with an odd number of agents strategies will not have a unique equilibrium. 
We show that this condition on the parity of $K$ is also sufficient for unique equilibria.

\begin{theorem}\label{thm:ZSPerturb}
    Almost every zero-sum game has a unique Nash equilibrium if and only if 
    \begin{itemize}
        \item $K=\sum_{i\in [n]}k_i$ is even,
        \item and $k_i\leq K/2$ for all $i\in [n]$.
    \end{itemize}
    Formally, let ${\cal U}^{(ij)}=\mathbb{R}^{k_i\times k_j}$ for $i<j$, ${\cal U}=\times_{i<j}{\cal U}^{(ij)}$.  
    For $i<j$, let $A^{(ij)}(U)=U^{(ij)}$ and $A^{(ji)}(U)=-[U^{(ij)}]^\intercal$. 
    Finally, let $\mu$ be an arbitrary full-dimensional Gaussian distribution over $\cal U$. 
    Then the set of $U\in {\cal U}$ that has a unique Nash equilibrium has probability measure one if and only if the above conditions are satisfied. 
\end{theorem}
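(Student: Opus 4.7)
The plan is to follow the same roadmap used for coordination and general games in Theorems \ref{thm:CPerturb} and \ref{thm:GPerturb}. First I would observe that the triple $(A,{\cal U},\mu)$ described in the statement realizes exactly the family of zero-sum polymatrix games with the given $k$ and fixed $b$: $A(U)$ is analytic in $U$, ${\cal U}$ is open and connected, so Theorem \ref{thm:UniqueOrNone} reduces the problem to either proving no game has a unique equilibrium or exhibiting a single game whose consolidated payoff matrix is invertible.

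Necessity is immediate from the two propositions already proved: if $k_i>K/2$ for some $i$ then Proposition \ref{thm:Half} says no game in the class has a unique equilibrium, and if $K$ is odd then Proposition \ref{thm:ZSEven} gives the same conclusion. In either case option (1) of Theorem \ref{thm:UniqueOrNone} holds, so the set of $U$ producing a unique equilibrium has measure zero.

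For sufficiency, assume $K$ is even and $k_i\leq K/2$ for every $i$. By Theorem \ref{thm:UniqueOrNone} and Lemma \ref{cor:invert}, it suffices to exhibit one $U\in{\cal U}$ for which $\det A(U)\neq 0$. I would use the block matrix
\begin{align*}
A \;=\; \left[\begin{array}{cc} 0 & I_{K/2} \\ -I_{K/2} & 0 \end{array}\right],
\end{align*}
i.e.\ $A_{w,K/2+w}=1$ and $A_{K/2+w,w}=-1$ for $w\in[K/2]$, and zero elsewhere. Checking that $A$ lies in the image of $A(\cdot)$ requires two things: $A=-A^\intercal$, which is immediate from the construction; and each diagonal block $A^{(ii)}$ is identically zero. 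The latter is where the hypothesis $k_i\leq K/2$ is used. Every non-zero entry of $A$ sits at a position $(p,q)$ with $|p-q|=K/2$; since agent $i$'s strategies form a consecutive block of length $k_i\leq K/2$, that block cannot contain two indices $p,q$ with $|p-q|=K/2$, and so no non-zero entry of $A$ falls inside $A^{(ii)}$.

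Finally I would compute $\det A$ via the Leibniz formula: the only permutation $\sigma$ with $\prod_w A_{w,\sigma_w}\neq 0$ is the product of the $K/2$ disjoint transpositions $(w,\,K/2+w)$, giving signature $(-1)^{K/2}$ and product $(-1)^{K/2}$, hence $\det A = 1\neq 0$. Applying Theorem \ref{thm:UniqueOrNone} then yields sufficiency. The main obstacle is the diagonal-block check, since it is the one spot where the bound $k_i\leq K/2$ is actually invoked in the construction; all other steps are either structural (skew-symmetry) or a one-line determinant computation.
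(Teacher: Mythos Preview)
Your proposal is correct and follows essentially the same approach as the paper: necessity from Propositions \ref{thm:Half} and \ref{thm:ZSEven}, sufficiency via Theorem \ref{thm:UniqueOrNone} using the very same skew-symmetric matrix $A_{w,K/2+w}=1$, $A_{K/2+w,w}=-1$ (your block form $\left[\begin{smallmatrix}0 & I_{K/2}\\ -I_{K/2} & 0\end{smallmatrix}\right]$ is exactly the paper's construction). Your diagonal-block check and determinant computation are both slightly more explicit than the paper's, which simply defers to the coordination-game argument, but the content is identical.
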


\begin{proof}
    The necessary conditions are given by Propositions \ref{thm:Half} and \ref{thm:ZSEven}. 
    The remainder of the proof follows identically to the proofs of Theorems \ref{thm:CPerturb} and \ref{thm:GPerturb}. 
    The selection of $(A,{\cal U}, \mu)$ captures all zero-sum polymatrix games. 
    Further, since $A$ is analytic, and ${\cal U}$ is connected and open, Theorem \ref{thm:UniqueOrNone} applies and it suffices to give a single instance where there is a unique equilibrium.

   The game we construct is nearly identically to the coordination game we constructed when $K$ is even. 
    Let $A_{w,\frac{K}{2}+w}=1$ and $A_{\frac{K}{2}+w,w}=-1$ for all $w\in \left[\frac{K}{2}\right]$ and let $A_{w,v}=0$ for all other pairs $\{w,v\}$ implying $A=-A^\intercal$. 
    The only difference between this instance and the even instance for coordination games is that the subdiagonal entries are now $-1$ ensuring that $A=-A^\intercal$. 

    The remainder of the proof then follows identically to the proof of Theorem \ref{thm:CPerturb} --- the determinant is non-zero implying a unique Nash equilibrium and corresponds.
    Further, as in the proof of Theorem \ref{thm:CPerturb}, $A$ is constructed so that it corresponds to a polymatrix zero-sum game where agent $i$ controls $k_i$ strategies.  
    Thus, almost every zero-sum game where $K$ is even and $k_i\leq K/2$ for all $i\in [n]$ will have a unique Nash equilibrium. 
\end{proof}

\subsection{Games with Constrained Strategy Spaces}\label{sec:Constraints}

In this section, we briefly discuss what our results indicate about games with constrained strategy spaces, e.g., when strategies correspond to probability vectors.

Consider an agent that has their strategies confined to the affine space $a^\intercal x_i = c$.
We may perform a variable substitution for $x_{iw}$ for a $w$ where $a_w\neq 0$, i.e., we replace $x_{iw}$ with $(c-a_{-w}x_{i,-w})/a_w$ where the $-w$ is used to indicate every element of the vector except for the $w$th element. 
Notably, when performing this transformation if the original game has form $x_j^\intercal A^{(ji)} x_i$, then the new game has the form $x_j^\intercal A^{(ji)} [x_{i,-w}, (c-a_{-w}x_{i,-w})/a_w]=x_j^\intercal \bar{A}^{(ji)}x_{i,-w} + \bar{b}_j x_j$ for some new $\bar{A}$ and $\bar{b}$ where $\bar{b}_j$ corresponds to a linear cost for agent $j$. 
Thus, all of our results extend to the setting when strategies space is affine. 

However, the problem is more complicated when the strategy space is constrained by inequalities.  Consider the case when strategies are probability vectors which require the affine constraint $\mathbf{1}^\intercal x_i =1$ and the half spaces $x_{ik}\geq 0$ for all $k$. 
For the set of games where there is a Nash equilibrium in the relative interior of the strategy space (equivalently, a fully-mixed equilibrium), our results extend since the half-spaces are not tight and may be ignored.  
E.g., by Proposition \ref{thm:ZSEven} there is no 3-agent game where each agent controls 2 strategies (a one-dimensional space) that has a unique, fully-mixed Nash equilibrium since the dimension of the combined strategy space is odd. 
However, if the smallest facet containing the set of Nash equilibria is 2-dimensional (e.g., when exactly one strategy is dominated), then almost every game will have a unique equilibrium. 
Similarly, it is possible that a game with even dimension may have a Nash equilibrium with odd support, implying there will be multiple equilibria.  
This means that the lack of uniqueness becomes a more significant problem when examining games with constrained strategy spaces. 

\section{Recovering Uniqueness with Learning Dynamics}\label{sec:Recover}

In this section, we examine the impact of non-uniqueness on the classical learning dynamic \ref{eqn:ContGD}.  
We remark that our results likely extend to discrete learning algorithms with have high regularity, e.g., alternating gradient descent approximately preserves the dynamics of \ref{eqn:ContGD} \cite{Bailey2019Regret,Bailey19Hamiltonian}.
However, it likely that different families of algorithms may require different approaches to verify whether they converge to a unique solution. 

Online learning in polymatrix zero-sum games has received countless attention due to its application in areas such as Generative Adversarial Neural Networks \cite{goodfellow2014generative}, bargain and resource allocation problems \cite{Shahrampour20OnlineAllocation}, and policy evaluation methods \cite{du2017stochastic}.
There are many algorithms and learning dynamics that guarantee relatively fast convergence to the set of Nash equilibria. 
For instance, \ref{eqn:ContGD} has $O(1/T)$ time-average convergence (informally, $\int_0^t x(s)ds/t\to x^*$) to the set Nash equilibria given an arbitrary starting strategy profiles $x(0)$ \cite{Mertikopoulos2018CyclesAdverserial}. 
\begin{align*}
    x_i(t) = x_i(0)+ \int_0^t \left( -b_i + \sum_{j\neq i} A^{(ij)}x_j(s)\right) ds \tag{Continuous-time Gradient Descent}\label{eqn:ContGD}
\end{align*}

In two-agent zero-sum games, it is well-known that almost every game in a compact space yields a unique Nash equilibrium \cite{van1991stability}. 
Therefore these fast learning algorithms and dynamics provide a simple, unique solution for agents' actions in these two-agent games. 
However, standard approaches for proving convergence only establish that the learning algorithms converge (or achieve time-average convergence) to the \textit{set} of Nash equilibria, which we've established is frequently not a singleton in the multiagent setting.  
This raises natural questions about the actual behavior of learning dynamics when the equilibrium is not unique: 
\textit{Which Nash equilibrium does our learning algorithm converge to? Does it converge to a single Nash equilibrium at all or does it wander between different equilibria?}
In this section, we establish that the classical learning dynamic gradient descent converges (time-average) to a unique Nash equilibrium based on the starting strategy profile; specifically we show that the time-average of the dynamics converge to the Nash equilibrium closest to the initial strategy profile $x(0)$. 

\subsection{A Cursory Understanding of \ref{eqn:ContGD}}\label{sec:Dynamics}

We begin by showing that \ref{eqn:ContGD} converges to the set of Nash equilibria. 
We remark that convergence is informally known to be a natural consequence of Lemma \ref{lem:invariance}, but, to our knowledge, no published paper establishes the result. 

\begin{lemma}[Invariant energy \cite{Mertikopoulos2018CyclesAdverserial,Bailey19Hamiltonian}]\label{lem:invariance}
    Let $x^*$ be an arbitrary Nash equilibrium for the polymatrix game $G=(n,k,A,b)$. 
    If the initial strategy profile $x(0)$ is updated with \ref{eqn:ContGD}, then  $||x(0)-x^*||^2=||x(t)-x^*||^2$ for all $t\geq 0$. 
\end{lemma}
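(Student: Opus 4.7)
The plan is to reduce the claim to a direct computation of the time derivative of the squared distance $E(t) = \|x(t) - x^*\|^2$ and show it vanishes identically. First, I would rewrite \ref{eqn:ContGD} in compact form: stacking the coordinate updates yields $\dot{x}(t) = A\,x(t) - b$, where the block-diagonal zeros of the consolidated payoff matrix $A$ automatically enforce that no agent interacts with itself. By Lemma \ref{lem:Nash}, a Nash equilibrium satisfies $A x^* = b$, so substituting $b = A x^*$ gives the homogeneous form $\dot{x}(t) = A\,(x(t) - x^*)$.

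Next, I would differentiate the energy along trajectories:
\[
\dot{E}(t) = 2\,(x(t) - x^*)^\intercal \dot{x}(t) = 2\,(x(t) - x^*)^\intercal A\,(x(t) - x^*).
\]
Any quadratic form $y^\intercal A y$ equals $\tfrac{1}{2}\, y^\intercal (A + A^\intercal)\, y$ and hence vanishes whenever $A$ is skew-symmetric. In the zero-sum polymatrix setting of Section \ref{sec:Recover}, the consolidated payoff matrix satisfies $A = -A^\intercal$, so $\dot{E}(t) \equiv 0$. Integrating from $0$ to $t$ then gives $\|x(t) - x^*\|^2 = \|x(0) - x^*\|^2$ for every $t \geq 0$, as claimed.

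The computation itself presents no real obstacle; the only conceptual subtlety worth flagging is that although the statement is written for a general polymatrix game $G = (n,k,A,b)$, the invariance genuinely requires the skew-symmetry $A = -A^\intercal$ and should therefore be read within the zero-sum framing of Section \ref{sec:Recover}, in keeping with the cited prior work. In coordination or generic polymatrix games the symmetric part of $A$ is non-trivial and $\|x(t)-x^*\|$ may in fact diverge along trajectories, confirming that this conservation law is sharp in its dependence on the zero-sum structure and explaining why the subsequent convergence results in the section are restricted to that setting.
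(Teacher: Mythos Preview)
Your argument is correct and is in fact the standard derivation of this conservation law. The paper does not supply its own proof of Lemma~\ref{lem:invariance}; it simply cites \cite{Mertikopoulos2018CyclesAdverserial,Bailey19Hamiltonian} and moves on, so there is no in-text proof to compare against. That said, the proof in those references proceeds exactly along the lines you wrote: rewrite the dynamics as $\dot x = A(x - x^*)$ using $Ax^* = b$, differentiate $\|x - x^*\|^2$, and invoke skew-symmetry of $A$ to kill the quadratic form.

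Your closing remark is also well taken and worth recording: although the lemma is phrased for a general polymatrix game $G = (n,k,A,b)$, the invariance hinges entirely on $A = -A^\intercal$, so it is really a zero-sum statement. The paper's Section~\ref{sec:Recover} only ever applies it in that context (Theorem~\ref{thm:converge1} is stated generally but Proposition~\ref{thm:hyperplane} and Theorem~\ref{thm:converge2} are explicitly zero-sum), so your reading is consistent with how the result is actually used.
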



    


\begin{theorem}\label{thm:converge1}
    Let $x^*$ be an arbitrary Nash equilibrium for the polymatrix game $G=(n,k,A,b)$. 
    If the initial strategy profile $x(0)$ is updated with \ref{eqn:ContGD} then the time-average of the strategies $\bar{x}(t)=\int_0^tx(s)ds$ converges to the set of Nash equilibria. 
\end{theorem}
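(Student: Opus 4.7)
The plan is to extract the ODE underlying \ref{eqn:ContGD} and integrate it. Differentiating the defining integral equation yields the linear system $\dot{x}(t) = Ax(t) - b$, where $A$ is the consolidated payoff matrix. Integrating over $[0,t]$ and dividing by $t$ produces the exact identity
\[
    A\bar{x}(t) - b \;=\; \frac{x(t)-x(0)}{t},
\]
interpreting the time-average as $\bar{x}(t) = \tfrac{1}{t}\int_0^t x(s)\,ds$. This identity is the heart of the proof: by Lemma~\ref{lem:Nash}, the Nash set equals $\mathrm{NE} = \{x : Ax = b\}$, so the left-hand side is precisely the residual of $\bar{x}(t)$ against the Nash condition.

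Next, I apply Lemma~\ref{lem:invariance}. Since an equilibrium $x^*$ is assumed to exist and $\|x(t) - x^*\| = \|x(0) - x^*\|$ for every $t\geq 0$, the entire trajectory is trapped on a sphere of radius $R := \|x(0) - x^*\|$ about $x^*$. In particular $\|x(t) - x(0)\| \leq 2R$ uniformly in $t$, so the right-hand side of the identity tends to zero; hence $A\bar{x}(t) - b \to 0$.

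Finally, I convert residual decay into set-distance decay. Writing $\bar{x}(t) - x^* = u(t) + v(t)$ with $u(t) \in \ker(A)$ and $v(t) \in \ker(A)^\perp$, we have $Av(t) = A\bar{x}(t) - b \to 0$; since $A$ restricted to the finite-dimensional subspace $\ker(A)^\perp$ is injective and thus admits a bounded left inverse, $v(t) \to 0$. Because $x^* + \ker(A) \subseteq \mathrm{NE}$, this yields $\mathrm{dist}(\bar{x}(t),\mathrm{NE}) \leq \|v(t)\| \to 0$. The dynamical content is entirely captured by the integration step; the only mild technical obstacle is this last linear-algebraic translation, which is standard for finite-dimensional $A$ but is what licenses the jump from pointwise residual convergence to convergence of $\bar{x}(t)$ towards the (possibly non-singleton) Nash set.
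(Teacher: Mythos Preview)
Your proof is correct and follows essentially the same route as the paper: integrate the dynamics to obtain the identity $A\bar{x}(t)-b=(x(t)-x(0))/t$, then invoke the invariant-energy lemma to bound $x(t)-x(0)$ and conclude $A\bar{x}(t)\to b$. The paper stops there, declaring that this already constitutes convergence to the Nash set; you go one honest step further by decomposing $\bar{x}(t)-x^*$ along $\ker(A)\oplus\ker(A)^\perp$ and using injectivity of $A$ on $\ker(A)^\perp$ to upgrade residual decay to $\mathrm{dist}(\bar{x}(t),\mathrm{NE})\to 0$. That extra step is a genuine tightening rather than a different approach, and it is exactly what is implicitly needed later when the paper intersects the limit set of $\bar{x}(t)$ with the Nash set.
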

\begin{proof}
    The set of Nash equilibria is given by $\{x: Ax=b\}$ and it suffices to show $\lim_{t\to \infty} \int_0^t Ax(s)ds/t\to b$.

    First, $x(t)=x(0)+\int_0^t (-b+Ax(s))ds$ and therefore $\int_0^t Ax(s)ds=x(t)-x(0)+t\cdot b$.
    Finally, by Lemma \ref{lem:invariance}, $x(t)-x(0)$ is bounded and therefore $(x(t)-x(0))/t\to \vec{0}$ as $t\to \infty$ yielding time-average convergence to the set of Nash equilibria. 
    \end{proof}

\subsection{Converging to the Closest Nash Equilibrium.}\label{sec:reduction}

Lemma \ref{lem:invariance} and Theorem \ref{thm:converge1} provide everything we need to establish convergence to the (unique) Nash equilibrium that is closest to initial strategy profile $x(0)$. 
The key part of Lemma \ref{lem:invariance} is that the strategies remain equidistant from \textit{every} Nash equilibrium. 
As depicted in Figure \ref{fig:MultiEquil} on page \pageref{fig:MultiEquil}, this means that, in the event of two equilibria, that the strategies will always be contained on two distinct hyperspheres implying the strategies will actually be contained to the hyperplane defined by the intersection of these two hyperspheres (Proposition \ref{thm:hyperplane}). 
Further, since the strategies are contained to this hyperplane, so is the time-average of the strategies.  
We show that there is a unique Nash equilibrium $x^*$ contained in these hyperplanes and therefore the only way we can achieve time-average convergence to the set of Nash equilibria (within these hyperplanes) is by converging directly to $x^*$ (Theorem \ref{thm:converge2}).

\begin{proposition}\label{thm:hyperplane}
    Suppose agents' strategies are updated with \ref{eqn:ContGD} in a polymatrix zero-sum game $G=(n,k,A,b)$ starting with the initial strategy profile $x(0)$.  
    Let $x^*$ be the unique Nash equilibrium closest to $x(0)$. 
    If $x^*+d$ is also a Nash equilibrium, then $d^\intercal x(t)=d^\intercal x^*$ for all $t$. 
\end{proposition}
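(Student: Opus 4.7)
The plan is to combine the invariant energy lemma applied to \emph{two} Nash equilibria with the characterization of the closest Nash equilibrium as an orthogonal projection. The key observation is that the set of Nash equilibria is affine (being the solution set of the linear system $Ax = b$ by Lemma \ref{cor:invert}), so both $x^*$ and $x^* + d$ being Nash means $Ad = 0$, which in turn means $x^* + \lambda d$ is a Nash equilibrium for \emph{every} scalar $\lambda$.

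First I would use the ``closest'' hypothesis: since $x^* + \lambda d$ traces out an affine line of Nash equilibria, and $x^*$ is the minimizer of $\|x(0) - (x^* + \lambda d)\|^2$ over $\lambda \in \mathbb{R}$, the first-order optimality condition at $\lambda = 0$ gives
\begin{equation*}
    d^\intercal (x(0) - x^*) = 0, \qquad \text{i.e.,} \qquad d^\intercal x(0) = d^\intercal x^*.
\end{equation*}
This turns the target identity $d^\intercal x(t) = d^\intercal x^*$ into the equivalent statement $d^\intercal x(t) = d^\intercal x(0)$ for all $t$, i.e., the quantity $d^\intercal x(t)$ is conserved along the dynamics.

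Next I would prove conservation by applying Lemma \ref{lem:invariance} twice, to the two distinct Nash equilibria $x^*$ and $x^* + d$:
\begin{align*}
    \|x(t) - x^*\|^2 &= \|x(0) - x^*\|^2,\\
    \|x(t) - (x^* + d)\|^2 &= \|x(0) - (x^* + d)\|^2.
\end{align*}
Expanding the second line as $\|x(t) - x^*\|^2 - 2 d^\intercal (x(t) - x^*) + \|d\|^2$ on the left and analogously on the right, then subtracting the first line from both sides, the squared-norm and $\|d\|^2$ terms cancel and leave $d^\intercal(x(t) - x^*) = d^\intercal(x(0) - x^*)$, hence $d^\intercal x(t) = d^\intercal x(0)$.

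Combining the two displays gives $d^\intercal x(t) = d^\intercal x(0) = d^\intercal x^*$, completing the proof. The only subtlety is verifying that $x^* + \lambda d$ really is Nash for every real $\lambda$ so that the projection argument is valid; this is immediate from Lemma \ref{cor:invert} since $Ax^* = A(x^*+d) = b$ forces $Ad = 0$. I do not expect any serious obstacle — the proof is essentially a two-line application of invariance combined with the orthogonality characterization of a nearest point on an affine set.
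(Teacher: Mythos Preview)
Your proposal is correct and follows essentially the same approach as the paper: both arguments use the affineness of the Nash set, the orthogonality coming from the ``closest'' hypothesis, and Lemma~\ref{lem:invariance} applied to two distinct equilibria, then expand squared norms. The only cosmetic difference is that the paper picks the symmetric pair $x^*\pm d$ and compares $\|x(t)-(x^*+d)\|^2$ with $\|x(t)-(x^*-d)\|^2$ directly, whereas you pick $x^*$ and $x^*+d$ and subtract; also, the characterization $Ax=b$ you invoke is Lemma~\ref{lem:Nash}, not Lemma~\ref{cor:invert}.
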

\begin{proof}
    First observe that the set of Nash equilibria is affine, i.e., 
    if $x^*$ and $x^*+d$ are Nash equilibria, then $x^*+\lambda\cdot d$ is a Nash equilibrium for all $\lambda\in \mathbb{R}$:
    By Lemma \ref{lem:Nash}, $x$ is a Nash equilibrium if and only if $Ax=b$.
    Therefore $Ad=A(x^*+d)-Ax^*=0$ and $A(x^*+\lambda\cdot d)= Ax^*+\lambda\cdot Ad=Ax^*=b$ implying $x^*+\lambda\cdot d$ is also a Nash equilibrium.

    Since the set of Nash equilibria is affine (and therefore convex), there is a unique solution to $\min_{x\in Nash} ||x_0-x||$ and $x^*$ is well-defined. 
    
    Since $x^*$ is the closest Nash equilibrium to $x(0)$, the points $x(0)$, $x^*$, and $x^*\pm d$ form two right angles and $x^*-d$ and $x^*+d$ are equidistant from $x(0)$;
    by the Pythagorean theorem,
        $ ||x(0)-(x^*\pm d)||^2
        = ||x(0)-x^*||^2 + ||x^*-(x^*\pm d)||^2
        = ||x(0)-x^*||^2 + ||d||^2.$
    
    Further, by Lemma \ref{lem:invariance}, the distance to each Nash equilibrium is invariant and therefore $||x(t)-(x^*+d)||^2=||x(t)-(x^*-d)||^2$ at every time $t\geq 0$. 
    Expanding and re-arranging terms yields the equality $d^\intercal x(t)=d^\intercal x^*$. 
\end{proof}

\begin{theorem}\label{thm:converge2}
    Suppose agents' strategies are updated with \ref{eqn:ContGD} in a polymatrix zero-sum game $(A,b)$ starting with the initial strategy profile $x(0)$.  
    Let $x^*$ be the unique Nash equilibrium closest to $x(0)$. 
    Then the time-average of the strategies $\bar{x}(t)=\int_0^tx(s)ds$ converges to $x^*$.
\end{theorem}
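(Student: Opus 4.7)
The plan is to combine Theorem~\ref{thm:converge1} (time-average converges to the set of Nash equilibria) with Proposition~\ref{thm:hyperplane} (strategies are confined to a hyperplane through $x^*$) via an orthogonal decomposition. Let $N:=\{d\in\mathbb{R}^K : Ad=\vec{0}\}$ denote the null space of the consolidated payoff matrix. By Lemma~\ref{lem:Nash}, the set of Nash equilibria is exactly the affine set $x^*+N$, and by Lemma~\ref{cor:invert} we may assume $N\neq \{0\}$ (otherwise the statement is trivial: $x^*=A^{-1}b$ is the only Nash equilibrium and Theorem~\ref{thm:converge1} already concludes).

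First, I would show that $\bar x(t)-x^*\in N^\perp$ for every $t\geq 0$. Fix any $d\in N$; then $x^*+d$ is a Nash equilibrium, so Proposition~\ref{thm:hyperplane} gives $d^\intercal x(s)=d^\intercal x^*$ for all $s\geq 0$. Integrating over $[0,t]$ and dividing by $t$ yields $d^\intercal \bar x(t)=d^\intercal x^*$, i.e. $\bar x(t)-x^*\perp d$. Since $d\in N$ was arbitrary, $\bar x(t)-x^*\in N^\perp$ for all $t$.

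Next, I would combine this invariant with Theorem~\ref{thm:converge1}. That theorem says the distance from $\bar x(t)$ to $x^*+N$ tends to $0$, so for any $\varepsilon>0$ and all sufficiently large $t$ there exists $d_t\in N$ with $\|\bar x(t)-(x^*+d_t)\|<\varepsilon$. Writing $\bar x(t)-x^*=v_t$ with $v_t\in N^\perp$ by the previous step, we obtain $\|v_t-d_t\|<\varepsilon$. Since $v_t\in N^\perp$ and $d_t\in N$ are orthogonal, the Pythagorean theorem gives $\|v_t\|^2+\|d_t\|^2=\|v_t-d_t\|^2<\varepsilon^2$, and in particular $\|\bar x(t)-x^*\|=\|v_t\|<\varepsilon$. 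Letting $\varepsilon\downarrow 0$ yields $\bar x(t)\to x^*$.

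The only subtle point that needs care is that Proposition~\ref{thm:hyperplane} is stated for the original trajectory $x(t)$, not for the time-average; but since the constraint $d^\intercal x(t)=d^\intercal x^*$ holds pointwise in $t$, linearity of integration transfers it to $\bar x(t)$ without loss. I do not anticipate any genuine obstacle: the argument is essentially that the intersection of an affine set $x^*+N$ (where the limit must live by Theorem~\ref{thm:converge1}) with the affine set $x^*+N^\perp$ (where the trajectory, and therefore its average, is trapped by Proposition~\ref{thm:hyperplane}) is the singleton $\{x^*\}$, so any accumulation point of $\bar x(t)$ must equal $x^*$.
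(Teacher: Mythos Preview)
Your proof is correct and follows essentially the same approach as the paper: confine the time-average to $x^*+N^\perp$ via Proposition~\ref{thm:hyperplane}, then intersect with the Nash set $x^*+N$ from Theorem~\ref{thm:converge1} to isolate $x^*$. Your formulation via the full null space $N$ and the explicit Pythagorean step is, if anything, slightly cleaner than the paper's, which works with a basis $\{d_w\}$ for $N$ and argues that $x^*$ is the only Nash equilibrium satisfying all the constraints $d_w^\intercal x=d_w^\intercal x^*$.
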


\begin{proof}
    If there is a unique Nash equilibrium, the result trivially holds by Theorem \ref{thm:converge1}.
    Next, suppose the set of Nash equilibria is given by $\{x= x^*+\sum_{w=1}^W \lambda_w\cdot d_w: \lambda\in \mathbb{R}^W\}$ where $\{d_w\}_{w=1}^W$ is a basis for the set of Nash equilibria. 
    By Theorem \ref{thm:converge1}, $
        \lim_{t\to \infty} \bar{x}(t)\subseteq \{x^*+\sum_{w=1}^W \lambda_w\cdot d_w: \lambda\in \mathbb{R}^W\}$ --- here we use $\lim$ for the limit set since Theorem \ref{thm:converge1} does not guarantee the limit exists.
        

    By Proposition \ref{thm:hyperplane}, $x(t)\in \{x: d_w^\intercal x = d_w^\intercal x^*\}$ for $w=1,...,W$ and therefore so is $\bar{x}(t)$.
    Thus, $
        \lim_{t\to \infty}\bar{x}(t)\subseteq \bigcap_{w=1}^W\{x: d_w^\intercal x=d_w^\intercal x^*\}$ 
    since hyperplanes are closed, i.e., since the limit point of a sequence of  points in the hyperplane is also in the hyperplane.
    Observe that $x^*$ is the only Nash equilibrium in $\bigcap_{w=1}^W\{x: d_w^\intercal x=d_w^\intercal x^*\}$ since any other Nash equilibrium in the form $x^*+\sum_{w=1}^W \lambda_w\cdot d_w$ violates $d_w^\intercal x= d_w^\intercal x^*$ for any $w$ where $\lambda_w\neq 0$.
    Therefore, $
         \bigcap_{w=1}^W\{x: d_w^\intercal x=d_w^\intercal x^*\}\cap \{x= x^*+\sum_{w=1}^W \lambda\cdot d_w: \lambda\in \mathbb{R}^W\}=\{x^*\}$
     and $\lim_{t\to \infty} \bar{x}(t)=x^*$.
\end{proof}

\section{Conclusions}
We provide necessary and sufficient conditions for uniqueness of Nash equilibria in zero-sum, coordination, and general polymatrix games. 
We show there are larges classes of polymatrix zero-sum games that fail to yield a unique equilibrium. 
While this result may be undesirable in many settings, we also show how the multiplicity of Nash equilibria can improve our understanding of classical multiagent learning dynamics. 
 
\bibliographystyle{plain}
\bibliography{References}

\begin{thebibliography}{10}

\bibitem{abernethy2021last}
Jacob Abernethy, Kevin~A Lai, and Andre Wibisono.
\newblock Last-iterate convergence rates for min-max optimization: Convergence of hamiltonian gradient descent and consensus optimization.
\newblock In {\em Algorithmic Learning Theory}, pages 3--47. PMLR, 2021.

\bibitem{Aumann87}
Robert~J. Aumann.
\newblock Game theory.
\newblock In {\em The New Palgrave: A Dictionary of Economics by J. Eatwell, M. Milgate, and P. Newman}, page 460–482. Macmillan, 1987.

\bibitem{Bailey2019Regret}
James~P Bailey, Gauthier Gidel, and Georgios Piliouras.
\newblock Finite regret and cycles with fixed step-size via alternating gradient descent-ascent.
\newblock In {\em Conference on Learning Theory}, pages 391--407. PMLR, 2020.

\bibitem{bailey2021stochastic}
James~P Bailey, Sai~Ganesh Nagarajan, and Georgios Piliouras.
\newblock Stochastic multiplicative weights updates in zero-sum games.
\newblock {\em arXiv preprint arXiv:2110.02134}, 2021.

\bibitem{Bailey18Divergence}
James~P. Bailey and Georgios Piliouras.
\newblock Multiplicative weights update in zero-sum games.
\newblock In {\em Proceedings of the 2018 ACM Conference on Economics and Computation}, EC ’18, page 321–338, New York, NY, USA, 2018. Association for Computing Machinery.

\bibitem{Bailey19GDRegret}
James~P. Bailey and Georgios Piliouras.
\newblock Fast and furious learning in zero-sum games: Vanishing regret with non-vanishing step sizes.
\newblock In {\em Advances in Neural Information Processing Systems 32}, pages 12977--12987. Curran Associates, Inc., 2019.

\bibitem{Bailey19Hamiltonian}
James~P. Bailey and Georgios Piliouras.
\newblock Multi-agent learning in network zero-sum games is a hamiltonian system.
\newblock In Edith Elkind, Manuela Veloso, Noa Agmon, and Matthew~E. Taylor, editors, {\em Proceedings of the 18th International Conference on Autonomous Agents and MultiAgent Systems, {AAMAS} '19, Montreal, QC, Canada, May 13-17, 2019}, pages 233--241. International Foundation for Autonomous Agents and Multiagent Systems, 2019.

\bibitem{barelli2013note}
Paulo Barelli and Idione Meneghel.
\newblock A note on the equilibrium existence problem in discontinuous games.
\newblock {\em Econometrica}, 81(2):813--824, 2013.

\bibitem{cai2016zero}
Yang Cai, Ozan Candogan, Constantinos Daskalakis, and Christos Papadimitriou.
\newblock Zero-sum polymatrix games: A generalization of minmax.
\newblock {\em Mathematics of Operations Research}, 41(2):648--655, 2016.

\bibitem{carbonell2018existence}
Oriol Carbonell-Nicolau and Richard~P McLean.
\newblock On the existence of nash equilibrium in bayesian games.
\newblock {\em Mathematics of Operations Research}, 43(1):100--129, 2018.

\bibitem{dasgupta1986existence}
Partha Dasgupta and Eric Maskin.
\newblock The existence of equilibrium in discontinuous economic games, i: Theory.
\newblock {\em The Review of economic studies}, 53(1):1--26, 1986.

\bibitem{daskalakis2019last}
C~Daskalakis and Ioannis Panageas.
\newblock Last-iterate convergence: Zero-sum games and constrained min-max optimization.
\newblock In {\em 10th Innovations in Theoretical Computer Science (ITCS) conference, ITCS 2019}, 2019.

\bibitem{daskalakis2021nearoptimal}
Constantinos Daskalakis, Maxwell Fishelson, and Noah Golowich.
\newblock Near-optimal no-regret learning in general games, 2021.

\bibitem{deligkas2017computing}
Argyrios Deligkas, John Fearnley, Rahul Savani, and Paul Spirakis.
\newblock Computing approximate nash equilibria in polymatrix games.
\newblock {\em Algorithmica}, 77(2):487--514, 2017.

\bibitem{du2017stochastic}
Simon~S Du, Jianshu Chen, Lihong Li, Lin Xiao, and Dengyong Zhou.
\newblock Stochastic variance reduction methods for policy evaluation.
\newblock In {\em International Conference on Machine Learning}, pages 1049--1058. PMLR, 2017.

\bibitem{fu2022equilibrium}
Qiang Fu, Zenan Wu, and Yuxuan Zhu.
\newblock On equilibrium existence in generalized multi-prize nested lottery contests.
\newblock {\em Journal of Economic Theory}, 200:105377, 2022.

\bibitem{Gidel2019Momentum}
Gauthier Gidel, Reyhane~Askari Hemmat, Mohammad Pezeshki, R\'emi~Le Priol, Gabriel Huang, Simon Lacoste-Julien, and Ioannis Mitliagkas.
\newblock Negative momentum for improved game dynamics.
\newblock In Kamalika Chaudhuri and Masashi Sugiyama, editors, {\em Proceedings of Machine Learning Research}, volume~89 of {\em Proceedings of Machine Learning Research}, pages 1802--1811. PMLR, 16--18 Apr 2019.

\bibitem{goodfellow2014generative}
Ian~J. Goodfellow, Jean Pouget-Abadie, Mehdi Mirza, Bing Xu, David Warde-Farley, Sherjil Ozair, Aaron Courville, and Yoshua Bengio.
\newblock Generative adversarial networks, 2014.

\bibitem{howson1972equilibria}
Joseph~T Howson~Jr.
\newblock Equilibria of polymatrix games.
\newblock {\em Management Science}, 18(5-part-1):312--318, 1972.

\bibitem{Mertikopoulos2018CyclesAdverserial}
Panayotis Mertikopoulos, Christos Papadimitriou, and Georgios Piliouras.
\newblock Cycles in adversarial regularized learning.
\newblock In {\em Proceedings of the Twenty-Ninth Annual ACM-SIAM Symposium on Discrete Algorithms}, SODA ’18, page 2703–2717, USA, 2018. Society for Industrial and Applied Mathematics.

\bibitem{mityagin2015zero}
Boris Mityagin.
\newblock The zero set of a real analytic function.
\newblock {\em arXiv preprint arXiv:1512.07276}, 2015.

\bibitem{miura2008security}
R~Ann Miura-Ko, Benjamin Yolken, John Mitchell, and Nicholas Bambos.
\newblock Security decision-making among interdependent organizations.
\newblock In {\em 2008 21st IEEE Computer Security Foundations Symposium}, pages 66--80. IEEE, 2008.

\bibitem{naghizadeh2017provision}
Parinaz Naghizadeh and Mingyan Liu.
\newblock Provision of public goods on networks: on existence, uniqueness, and centralities.
\newblock {\em IEEE Transactions on Network Science and Engineering}, 5(3):225--236, 2017.

\bibitem{Shahrampour20OnlineAllocation}
S.~{Pu}, J.~J. {Escudero-Garzas}, A.~{Garcia}, and S.~{Shahrampour}.
\newblock An online mechanism for resource allocation in networks.
\newblock {\em IEEE Transactions on Control of Network Systems}, pages 1--1, 2020.

\bibitem{toonsi2024higher}
Sarah Toonsi and Jeff Shamma.
\newblock Higher-order uncoupled dynamics do not lead to nash equilibrium-except when they do.
\newblock {\em Advances in Neural Information Processing Systems}, 36, 2024.

\bibitem{van1991stability}
Eric Van~Damme.
\newblock {\em Stability and perfection of Nash equilibria}, volume 339.
\newblock Springer, 1991.

\bibitem{pmlr-v202-vyas23a}
Abhijeet Vyas, Brian Bullins, and Kamyar Azizzadenesheli.
\newblock Competitive gradient optimization.
\newblock In Andreas Krause, Emma Brunskill, Kyunghyun Cho, Barbara Engelhardt, Sivan Sabato, and Jonathan Scarlett, editors, {\em Proceedings of the 40th International Conference on Machine Learning}, volume 202 of {\em Proceedings of Machine Learning Research}, pages 35243--35276. PMLR, 23--29 Jul 2023.

\bibitem{NEURIPS2023_bf331c87}
Guillaume Wang and L\'{e}na\"{\i}c Chizat.
\newblock Local convergence of gradient methods for min-max games: Partial curvature generically suffices.
\newblock In A.~Oh, T.~Naumann, A.~Globerson, K.~Saenko, M.~Hardt, and S.~Levine, editors, {\em Advances in Neural Information Processing Systems}, volume~36, pages 60841--60852. Curran Associates, Inc., 2023.

\end{thebibliography}

\end{document}